\documentclass{article}
%%%%%%%%%%%%%%%%%%%%%%%%%%%%%%%%%%%%%%%%%%%%%%%%%%%%%%%%%%%%%%%%%%%%%%%%%%%%%%%%%%%%%%%%%%%%%%%%%%%%%%%%%%%%%%%%%%%%%%%%%%%%%%%%%%%%%%%%%%%%%%%%%%%%%%%%%%%%%%%%%%%%%%%%%%%%%%%%%%%%%%%%%%%%%%%%%%%%%%%%%%%%%%%%%%%%%%%%%%%%%%%%%%%%%%%%%%%%%%%%%%%%%%%%%%%%
\usepackage{amsfonts}
\usepackage{amsmath}
\usepackage{amscd}

\setcounter{MaxMatrixCols}{10}
%TCIDATA{OutputFilter=LATEX.DLL}
%TCIDATA{Version=5.50.0.2953}
%TCIDATA{<META NAME="SaveForMode" CONTENT="1">}
%TCIDATA{BibliographyScheme=Manual}
%TCIDATA{Created=Wednesday, August 05, 2015 16:11:00}
%TCIDATA{LastRevised=Friday, November 27, 2015 18:17:46}
%TCIDATA{<META NAME="GraphicsSave" CONTENT="32">}
%TCIDATA{<META NAME="DocumentShell" CONTENT="Standard LaTeX\Standard LaTeX Article">}
%TCIDATA{Language=American English}
%TCIDATA{CSTFile=40 LaTeX article.cst}

\newtheorem{theorem}{Theorem}

\newtheorem{corollary}[theorem]{Corollary}

\newtheorem{definition}[theorem]{Definition}

\newtheorem{lemma}[theorem]{Lemma}

\newtheorem{proposition}[theorem]{Proposition}

\newenvironment{proof}[1][Proof]{\noindent\textbf{#1.} }{\ \rule{0.5em}{0.5em}}
\input{tcilatex}
\sloppy
\flushbottom

\begin{document}

\title{Energy-momentum tensors in classical field theories - a modern
perspective}
\author{Nicoleta Voicu \\
%EndAName
"Transilvania"\ University, Brasov, Romania}
\date{}
\maketitle

\begin{abstract}
The paper presents a general geometric approach to energy-momentum tensors
in Lagrangian field theories, based on a Hilbert-type definition.

The approach is consistent with the ones defining energy-momentum tensors in
terms of hypermomentum maps given by the diffeomorphism invariance of the
Lagrangian \ - and, in a sense, complementary to these, with the advantage
of an increased simplicity of proofs and also, opening up new insights into
the topic. A special attention is paid to the particular cases of metric and
metric-affine theories.
\end{abstract}

\textbf{Keywords: }stress-energy-momentum tensor, energy-momentum balance
law, Lepage equivalent of a Lagrangian

\section{Introduction}

The paper presents a geometric approach to energy-momentum tensors in
general Lagrangian field theories. Rather than trying to offer a detailed
review of the long and intricate history of the topic, it aims to bring more
simplicity and clarity - by looking at the problem from a somewhat different
perspective.

This perspective is nothing but an extension to arbitrary Lagrangian field
theories of the general relativistic one, based on a Hilbert-type definition
of the energy-momentum tensor.

The energy-momentum tensor introduced this way is, of course, not new - and
neither are its usual properties (generalized covariant conservation law,
gauge invariance etc.). When evaluated on critical sections of the matter
Lagrangian, the Hilbert-type energy-momentum tensor agrees (up to a sign)
with the "improved Noether current", as introduced by Gotay and Marsden, 
\cite{Gotay}. Still, there are some advantages of using a Hilbert-type
approach over a Noether-type one. The first one is simplicity, both in the
calculation of the energy-momentum tensor and in the proofs of its
properties.

The Hilbert-type formula also opens up the possibility of using results of
the inverse problem of variational calculus, such as:\ the classification of
first-order energy-momentum source forms, \cite{Krupka-em-tensors}, or the
notion of variational completion, \cite{inv-problem-book}. In the situation
when we know, e.g., by some empirical method, a term of an energy-momentum
tensor, the latter allows one to recover its full expression, together with
the corresponding Lagrangian.

\bigskip

Before passing to the technical details of our story, let us present in
brief the main problems surrounding energy-momentum tensors.

- The \textit{canonical, or Noether\ energy-momentum tensor }is\textit{\ }%
defined, in special relativity, by means of the Noether current due to the
invariance of the Lagrangian to the group of spacetime \textit{translations}%
. It is conserved on-shell and its time-time and time-space components give
the correct energy and momentum densities of the system. Still, as it is
generally neither symmetric, nor gauge-invariant, it requires an
"improvement"\ procedure; the classical special-relativistic recipe
(Belinfante\&Rosenfeld, 1940) is based on enlarging the considered symmetry
group to the whole Poincar\'{e} group.

-\ General relativity came with a completely different toolkit. The \textit{%
Hilbert, or metric energy-momentum tensor,} given by the variational
derivative of the matter Lagrangian with respect to the metric, is
symmetric, gauge-invariant and, as a result of the diffeomorphism invariance
of the Lagrangian, its covariant divergence vanishes on--shell. Hence, it
does not require any improvement procedure. But, on the other hand, \cite%
{Gotay}, it is not obvious at all that it gives the correct energy and
momentum densities of the system under discussion.

\bigskip

It thus appeared the idea of obtaining the energy-momentum tensor of
basically \textit{any} classical field theory as a kind of improved Noether
current, which should coincide, in the case of general relativity, with the
Hilbert one. But this proved to be a highly non-trivial task - and gave rise
to long-standing debates (a good account of which is given, e.g., in \cite%
{Gotay}).

The first problem which appears is the one of the symmetry group to be
considered. On a general manifold $X,$ translations - let alone the Poincar%
\'{e} group -\ make no sense; the natural choice in this case seems to be
the group $Diff(X)$ of diffeomorphisms of $X.$ But, as $Diff(X)$ is
infinite-dimensional, no non-trivial Noether current can be obtained from
it; the corresponding "Noether current" (called a \textit{hypermomentum map}%
) is always zero when \textit{all} the variables are subject to
Euler-Lagrange equations, \cite{Forger}, \cite{Gotay}. The way out of this
impasse consists, \cite{Forger}, \cite{Giachetta1}, in dividing the
variables of the theory into \textit{background }ones (e.g., a metric and/or
a connection, a tetrad etc.) and \textit{dynamical }(or \textit{matter})\
ones, and, accordingly, in splitting the total Lagrangian $\lambda $ into a
sum%
\begin{equation*}
\lambda =\lambda _{b}+\lambda _{m},
\end{equation*}%
where the \textit{background Lagrangian }$\lambda _{b}$ only depends on the
background variables and their derivatives up to some order, while the 
\textit{matter Lagrangian} $\lambda _{m}$ may depend on all the variables of
the theory. For the matter Lagrangian $\lambda _{m},$ the background
variables will no longer be subject to any Euler-Lagrange equations (these
are only supposed to obey Euler-Lagrange equations for the \textit{total}
Lagrangian $\lambda $). This way, one can obtain a nonzero hypermomentum
map, which solves the problem.

\bigskip

Improving Noether currents is the path taken by the majority of the authors.
A general, geometric and systematic approach in this respect is the one by
Gotay and Marsden (\cite{Gotay}, 2001) - with some refinements brought by
Forger and R\H{o}mer\footnote{%
The main advance brought in \cite{Forger} resides in the fact that\textit{\ }%
the energy-momentum tensor is built from the matter Lagrangian $\lambda _{m}$
only (while in \cite{Gotay}, it is built from the total Lagrangian $\lambda $
- and thus vanishes on-shell). Also, the energy-momentum tensor is regarded
as a geometric object on a jet bundle of the total configuration manifold,
rather than on the spacetime manifold $X$ - a standpoint which we will also
adopt in the following.}, \cite{Forger}. The approach is extended to higher
order variational problems of differential index 1 by Fern\'{a}ndez, Garc%
\'{\i}a and Rodrigo, \cite{Fernandez}.

Roughly speaking, the main result in \cite{Gotay} states that, if a
first-order Lagrangian on a fibered manifold $Y\overset{\pi }{\rightarrow }X$
is invariant to the flow of an arbitrary vector field\footnote{%
As diffeomorphisms of $X$ act on $X,$ not on $Y,$ an embedding $%
Diff(X)\rightarrow Aut(Y)\ $is needed in order to correctly define this
invariance. A\ rigorous definition will be given below.} $\xi $ on $X$ and $%
\mathcal{J}_{\xi }$ is the corresponding Noether current, then, for any
solution $\gamma :X\rightarrow Y$ of the Euler-Lagrange equations, there
uniquely exists a (1,1)-tensor density $\mathcal{T}(\gamma )=\mathcal{T}%
_{~j}^{i}(\gamma )\dfrac{\partial }{\partial x^{i}}\otimes dx^{j}$ on $X$
such that, for all compact hypersurfaces $\Sigma \subset X,$ 
\begin{equation}
\underset{\Sigma }{\int }J^{1}\gamma ^{\ast }\mathcal{J}_{\xi }=\underset{%
\Sigma }{\int }\mathcal{T}_{~j}^{i}(\gamma )\xi ^{j}\omega _{i}
\label{ultralocality}
\end{equation}%
(here, $\omega _{i}=dx^{0}\wedge ...\wedge dx^{i-1}\wedge dx^{i+1}\wedge
...\wedge dx^{n}$).

The improved canonical energy-momentum tensor density (\ref{ultralocality})
is gauge invariant and given (up to a sign) by a Hilbert-type formula.
Moreover,\ relation (\ref{ultralocality}) ensures that $\mathcal{T}(\gamma )$
is a "physically correct" energy-momentum tensor, in the following sense, 
\cite{Gotay}. If $\Sigma $ is a Cauchy hypersurface and the vector field $%
\xi $ is transversal to $\Sigma ,$ then the Hamiltonian (energy)\
corresponding to the "direction of evolution" $\xi ,$ is $H_{\xi }=-\underset%
{\Sigma }{\int }J^{1}\gamma ^{\ast }\mathcal{J}_{\xi },$ i.e., it is given,
again up to a sign, by (\ref{ultralocality}).

\bigskip

The latter remark points out that one could hardly overestimate the
importance of having the energy-momentum tensor related to Noether currents
as in (\ref{ultralocality}). But there is another way of looking at the same
relation. Instead of using it as a definition, we will prefer to obtain it
as a consequence of a Hilbert-type definition.

We will assume, as in \cite{Gotay}, \cite{Fernandez}, that the differential
index of the theory in the background variables is 1. Under this assumption,
in the Euler-Lagrange expression of the matter Lagrangian $\lambda _{m}$
with respect to the background variables, we can isolate a total divergence
term - which leads to the correct energy-momentum tensor. The remaining
terms give a generalized covariant conservation law (an \textit{%
energy-momentum balance law}) - which is obtained as an immediate
consequence of the first variation formula.

These results hold true regardless of the order of the Lagrangian or of the
nature of the coupling (minimal or non-minimal)\ between the background and
the matter variables.

\bigskip

The paper is structured as follows. In Section 2, we present a quite
detailed overview of the necessary notions and results. Section 3 is devoted
to the definition of the energy-momentum tensor and Section 4, to its main
properties, namely, energy-momentum balance law and gauge invariance.
Section 5 discusses two major particular cases: metric and metric-affine
backgrounds. In the metric-affine case, the obtained energy-momentum balance
law has a simpler (and explicitly covariant) expression compared to the
known ones, \cite{Obukhov}, \cite{Lompay}. The last section presents an
application to energy-momentum tensors of the notion of variational
completion.

\section{Variational calculus - the modern framework}

The language of differential forms, in which a Lagrangian is regarded as a
differential form on a certain jet bundle, allows a concise, coordinate-free
formulation of variational calculus on arbitrary manifolds. In this
approach, the variation of a Lagrangian $\lambda $ is understood as its Lie
derivative with respect to a vector field on the respective jet bundle. The
notions and results presented in this section can be found in more detail,
e.g., in the monograph \cite{Krupka-book}.

\subsection{Differential forms on jet prolongations of a fibered manifold 
\label{Definitions}}

Consider a fibered manifold $Y$ of dimension $m+n,$ with connected,
orientable $n$-dimensional base $X$ and projection $\pi :Y\rightarrow X.$
The manifold $Y$ will be called the \textit{configuration manifold} and $X,$
the \textit{spacetime manifold}. Fibered charts $(V,\psi )$, $\psi
=(x^{i},y^{\sigma })$ on $Y$ induce the fibered charts $(V^{r},\psi ^{r}),$ $%
\psi ^{r}=(x^{i},y^{\sigma },y_{~j_{1}}^{\sigma
},...,y_{~j_{1}j_{2}...j_{r}}^{\sigma })$ on the $r$-jet prolongation $%
J^{r}Y $ of $Y$ and $(U,\phi ),$ $\phi =(x^{i})$ on $X.$ We denote by $\pi
^{r,s}$ the canonical projections $\pi ^{r,s}:J^{r}Y\rightarrow J^{s}Y,~\ \
(x^{i},y^{\sigma },y_{~j_{1}}^{\sigma },...,y_{~j_{1}j_{2}...j_{r}}^{\sigma
})\mapsto (x^{i},y^{\sigma },y_{~j_{1}}^{\sigma
},...,y_{~j_{1}j_{2}...j_{s}}^{\sigma })$ ($r>s,$ $J^{0}Y:=Y$) and by $\pi
^{r},$ the projection $\pi ^{r}:J^{r}Y\rightarrow X,~(x^{i},y^{\sigma
},y_{~j_{1}}^{\sigma },...,y_{~j_{1}j_{2}...j_{r}}^{\sigma })\mapsto
(x^{i}). $ By $\Omega _{k}^{r}Y,$ we will mean the set of $k$-forms of order 
$r$ over $Y$. In particular, $\mathcal{F}(Y):=\Omega _{0}^{r}Y$ is the set
of real-valued smooth functions over $J^{r}Y.$ The set of $\mathcal{C}%
^{\infty } $-smooth sections of $Y$ will be denoted by $\Gamma (Y);$ its
elements $\gamma :X\rightarrow Y,$ $(x^{i})\mapsto (y^{\sigma }(x^{i}))$
will be called \textit{fields. }The notation $\mathcal{X}(Y)$ will mean the
module of vector fields on $Y.$

\bigskip

\textbf{Horizontal forms. Lagrangians. }A differential form $\rho \in \Omega
_{k}^{r}Y$ is called $\pi ^{r}$-horizontal, or, simply, \textit{horizontal},
if $\rho (\Xi _{1},...,\Xi _{k})=0$ whenever one of the vector fields $\Xi
_{i},$ $i=\overline{1,k},$ is $\pi ^{r}$-vertical (i.e., $T\pi ^{r}(\Xi
_{i})=0$). Any horizontal form on $\Omega _{k}^{r}Y$ is locally expressed as:%
\begin{equation}
\rho =\dfrac{1}{k!}A_{i_{1}i_{2}...i_{k}}dx^{i_{1}}\wedge dx^{i_{2}}\wedge
...\wedge dx^{i_{k}},  \label{horizontal_form}
\end{equation}%
where the coefficients $A_{i_{1}i_{2}...i_{k}},$ $k\leq n,$ are smooth
functions of the coordinates $x^{i},y^{\sigma },y_{~j_{1}}^{\sigma
},...,y_{~j_{1}j_{2}...j_{r}}^{\sigma }$. Similarly, one can introduce $\pi
^{r,s}$\textit{-horizontal} forms, $0\leq s\leq r$; locally, these are
generated by exterior products of the differentials $dx^{i},dy^{\sigma
},...,dy_{~j_{1}...j_{s}}^{\sigma }.$

In particular, a horizontal $n$-form $\lambda \in \Omega _{n}^{r}Y,$ where $%
n=\dim X,$ is called a \textit{Lagrangian. }Locally, a Lagrangian is
expressed as:%
\begin{equation}
\lambda =\mathcal{L}\omega _{0},~\ \ \ \ \ \ \mathcal{L=L}(x^{i},y^{\sigma
},...,y_{i_{1}...i_{r}}^{\sigma }),
\end{equation}%
where $\omega _{0}:=dx^{1}\wedge ...\wedge dx^{n}.$

\bigskip

Any differential $k$-form $\rho \in \Omega _{k}^{r}(Y)$ can be transformed
into a horizontal $k$-form of order $r+1.$ This is achieved by means of the 
\textit{horizontalization }operator, which is the (unique)\ morphism of
exterior algebras $h:\Omega ^{r}(Y)\rightarrow \Omega ^{r+1}(Y)$ (where $%
\Omega ^{s}Y$ means the set of all differential forms of order $s$ over $Y$)
such that, for any $f\in \mathcal{F}(Y)$ and any fibered chart: $hf=f\circ
\pi ^{r+1,r}~$and$\ hdf=d_{i}fdx^{i}.$ Here,%
\begin{equation*}
d_{i}f:=\partial _{i}f+\dfrac{\partial f}{\partial y^{\sigma }}%
y_{~i}^{\sigma }+...\dfrac{\partial f}{\partial y_{~j_{1}...j_{r}}^{\sigma }}%
y_{~j_{1}...j_{r}i}^{\sigma }.
\end{equation*}%
On the natural basis 1-forms, it acts as: 
\begin{equation}
hdx^{i}:=dx^{i},~\ hdy^{\sigma }=y_{~i}^{\sigma
}dx^{i},...,hdy_{~j_{1}...j_{k}}^{\sigma }=y_{~j_{1}...j_{k}i}^{\sigma
}dx^{i},\ ~\ \ k=\overline{1,r}.  \label{horizontalization_basis}
\end{equation}%
Moreover, for any $\rho \in \Omega _{k}^{r}(Y),$ there holds the equality:%
\begin{equation}
J^{r}\gamma ^{\ast }\rho =J^{r+1}\gamma ^{\ast }(h\rho ),~\ \ \forall \gamma
\in \Gamma (Y).  \label{property_h}
\end{equation}

If two $\pi ^{r}$-horizontal forms $\rho ,\theta \in \Omega _{k}^{r}(Y)$
satisfy $J^{r}\gamma ^{\ast }\rho =J^{r}\gamma ^{\ast }\theta $ for any
section $\gamma \in \Gamma (Y),$ then $\rho =\theta .$

\bigskip

\textbf{Contact forms and first canonical decomposition of a differential
form on }$J^{r}Y$\textbf{. }A form $\theta \in \Omega _{k}^{r}Y$ is a 
\textit{contact form }of order $r$ on $Y$ if it is annihilated by all jets $%
J^{r}\gamma $ of sections $\gamma \in \Gamma (Y);$ equivalently, $\theta $
is a contact form if and only if it belongs to the kernel of $h.$

For instance, 
\begin{eqnarray}
&&\omega ^{\sigma }=dy^{\sigma }-y_{~j}^{\sigma }dx^{j},~\ \ \omega
_{~i_{1}}^{\sigma }=dy_{~i_{1}}^{\sigma }-y_{~i_{1}j}^{\sigma }dx^{j},...
\label{contact_1-forms} \\
&&~\ \ \ \omega _{~i_{1}i_{2}...i_{r-1}}^{\sigma
}=dy_{~i_{1}i_{2}...i_{r-1}}^{\sigma }-y_{~i_{1}i_{2}...i_{r-1}j}^{\sigma
}dx^{j},  \notag
\end{eqnarray}%
represent contact forms on $J^{r}Y.$ These contact forms can be used in
order to construct a local basis of the module of 1-forms over $J^{r}Y,$
called the \textit{contact basis: }$\{dx^{i},\omega ^{\sigma },....,\omega
_{~i_{1}...i_{r-1}}^{\sigma },dy_{~i_{1}...i_{r}}^{\sigma }\}.$

A $k$-form $\theta \in \Omega _{k}^{r}Y$ is $l$\textit{-contact (}$l\leq k$)%
\textit{\ }if, corresponding to any fibered chart on $Y$, in the
decomposition of the pulled back form $(\pi ^{r+1,r})^{\ast }\theta \in
\Omega _{k}^{r+1}Y$ in the contact basis over $J^{r+1}Y,$ each term contains
exactly $l$ contact 1-forms (\ref{contact_1-forms}), i.e., 
\begin{equation*}
(\pi ^{r+1,r})^{\ast }\theta =\dfrac{1}{l!(k-l)!}\theta
_{A_{1}...A_{l},i_{l+1},...,i_{k}}\omega ^{A_{1}}\wedge ...\wedge \omega
^{A_{l}}\wedge dx^{i_{l+1}}\wedge ...\wedge dx^{i_{k}},
\end{equation*}%
where $A\in \{\sigma ,(_{j}^{\sigma }),...,(_{j_{1}...j_{r}}^{\sigma })\}$.

An arbitrary $k$-form $\rho \in \Omega _{k}^{r}Y$ admits a unique
decomposition (called the \textit{first canonical decomposition}):%
\begin{equation}
(\pi ^{r+1,r})^{\ast }\rho =h\rho +p_{1}\rho +...+p_{k}\rho ,
\label{first_canonical_decomp}
\end{equation}%
where the form $p_{l}\rho $ is $l$-contact, $l=1,...,k$. The sum $p\rho
=p_{1}\rho +...+p_{k}\rho $ is the \textit{contact component} of $\rho .$

A $\pi ^{r,0}$-horizontal, 1-contact $(n+1)$-form $\eta \in \Omega
_{n+1}^{r}Y$ is called a \textit{source form}. Locally, a source form is
expressed as: 
\begin{equation}
\eta =\eta _{\sigma }\omega ^{\sigma }\wedge \omega _{0}.
\label{source_form}
\end{equation}

\bigskip

With respect to coordinate changes on $J^{r}Y$ induced by fibered coordinate
changes $x^{i}=x^{i}(x^{i^{\prime }}),y^{\sigma }=y^{\sigma }(x^{i^{\prime
}},y^{\sigma ^{\prime }}),$ we have:%
\begin{eqnarray}
dx^{i} &=&\dfrac{\partial x^{i}}{\partial x^{i^{\prime }}}dx^{i^{\prime
}},~\ \omega ^{\sigma }=\dfrac{\partial y^{\sigma }}{\partial y^{\sigma
^{\prime }}}\omega ^{\sigma ^{\prime }},  \label{transf_contact_forms} \\
\omega _{0} &=&\det (\dfrac{\partial x^{j}}{\partial x^{j^{\prime }}})\omega
_{0^{\prime }},\ \omega _{i}=\dfrac{\partial x^{i^{\prime }}}{\partial x^{i}}%
\det (\dfrac{\partial x^{j}}{\partial x^{j^{\prime }}})\omega _{i^{\prime }},
\label{transf_omega0}
\end{eqnarray}%
where $\omega _{i}=\mathbf{i}_{\partial /\partial x^{i}}\omega
_{0}=(-1)^{i-1}dx^{1}\wedge ....dx^{i-1}\wedge dx^{i+1}\wedge ...\wedge
dx^{n}$ and 
\begin{equation}
\eta _{\sigma ^{\prime }}=\dfrac{\partial y^{\sigma }}{\partial y^{\sigma
^{\prime }}}\det (\dfrac{\partial x^{i}}{\partial x^{i^{\prime }}})\eta
_{\sigma }  \label{transf_source_form}
\end{equation}%
for the components of a (globally defined) source form (\ref{source_form})
on $Y$.

\subsection{Lepage equivalents of a Lagrangian and first variation formula 
\label{Lepage_equiv}}

\textbf{Lepage equivalents. }Consider a Lagrangian%
\begin{equation}
\lambda =\mathcal{L}\omega _{0}\in \Omega _{n}^{r}Y,~\ \ \ \ \ \ \mathcal{L=L%
}(x^{i},y^{\sigma },...,y_{i_{1}...i_{r}}^{\sigma }).
\label{general_Lagrangian}
\end{equation}%
The \textit{action\ }attached to the Lagrangian (\ref{general_Lagrangian})
and to a compact domain $D\subset X\ $is the function $S:\Gamma
(Y)\rightarrow \mathbb{R},$ given by: 
\begin{equation*}
S(\gamma )=\underset{D}{\int }J^{r}\gamma ^{\ast }\lambda .
\end{equation*}%
The \textit{variation} $\delta _{\Xi }S:\Gamma (Y)\rightarrow \mathbb{R}$ of 
$S$ under the flow of a vector field $\Xi =\xi ^{i}\partial _{i}+\Xi
^{\sigma }\partial _{\sigma }$ on $Y$ is given, \cite{Krupka-book}, by the
Lie derivative $\partial _{J^{r}\Xi }\lambda $ of $\lambda $ with respect to
the prolongation $J^{r}\Xi $:%
\begin{equation}
\delta _{\Xi }S(\gamma )=\underset{D}{\int }J^{r}\gamma ^{\ast }\partial
_{J^{r}\Xi }\lambda .  \label{variation}
\end{equation}%
A section $\gamma :X\rightarrow Y,(x^{i})\mapsto y^{\sigma }(x^{i})$ is
called a \textit{critical section\ }for $S,$ if the variation $\delta _{\Xi
}S(\gamma )$ vanishes, i.e., if:%
\begin{equation}
\underset{D}{\int }J^{r}\gamma ^{\ast }(\partial _{J^{r}\Xi }\lambda )=0.
\label{pulled_back_variation}
\end{equation}

\bigskip

A\textit{\ Lepage equivalent}\ of a Lagrangian $\lambda \in \Omega _{n}^{r}Y$
is an $n$-form $\theta _{\lambda }$ on some jet prolongation $J^{s}Y,$ with
the following properties:

(1) $\theta _{\lambda }$ defines the same variational problem as $\lambda ,$
i.e.:%
\begin{equation}
(\pi ^{q,s+1})^{\ast }h\theta _{\lambda }=(\pi ^{q,r})^{\ast }\lambda ,
\label{Lepage_def1}
\end{equation}%
where $q:=\max (s+1,r);$

(2)\ the first contact component $p_{1}d\theta _{\lambda }$ is a source form
(i.e., it is generated by $\omega ^{\sigma }$ alone).

\bigskip

Every Lagrangian $\lambda \in \Omega _{n}^{r}Y$ admits Lepage equivalents $%
\theta _{\lambda }$. Corresponding to any local chart, these Lepage
equivalents are expressed, \cite{Krupka-book}, as:\ $\theta _{\lambda
}=\Theta _{\lambda }+d\eta +\mu ,$ where $\eta $ is a contact form, $\mu $
is at least 2-contact and%
\begin{equation}
\Theta _{\lambda }=\mathcal{L}\omega _{0}+(\overset{r-1}{\underset{k=0}{\sum 
}}\overset{r-1-k}{\underset{l=0}{\sum }}(-1)^{l}d_{p_{1}}...d_{p_{l}}\dfrac{%
\partial \mathcal{L}}{\partial y_{~j_{1}...j_{k}p_{1}...p_{l}i}^{\sigma }}~\
\ \omega _{j_{1}...j_{k}}^{\sigma }\wedge \omega _{i};
\label{Poincare-Cartan_general}
\end{equation}%
the order of $\Theta _{\lambda }$ is $s\leq 2r-1$.

The notion of Lepage equivalent generalizes the idea of Poincar\'{e}-Cartan
form from mechanics and allows one to obtain a coordinate-free description
of both Euler-Lagrange equations and Noether theorem, as follows.

\bigskip

\textbf{First variation formula. }Let $\lambda \in \Omega _{n}^{r}Y$ be a
Lagrangian, $\theta _{\lambda }\in \Omega _{n}^{s}Y,$ an arbitrary Lepage
equivalent of $\lambda ,$ of some order $s$ and $\Xi \in \mathcal{X}(Y),$ a
projectable vector field. Then, there holds the \textit{first variation
formula, }\cite{Krupka-book},%
\begin{equation}
J^{r}\gamma ^{\ast }(\partial _{J^{r}\Xi }\lambda )=J^{s+1}\gamma ^{\ast }%
\mathbf{i}_{J^{s+1}\Xi }(p_{1}d\theta _{\lambda })+d(J^{s}\gamma ^{\ast }%
\mathbf{i}_{J^{s}\Xi }\theta _{\lambda }).  \label{first_variation_refined}
\end{equation}%
The terms in the right hand side of (\ref{first_variation_integral}) have
the following meaning:

\textit{i) }the source form $p_{1}d\theta _{\lambda }$ is the \textit{%
Euler-Lagrange form} attached to $\lambda $:%
\begin{equation}
p_{1}d\theta _{\lambda }=E(\lambda )\in \Omega _{n+1}^{s+1}(Y),
\label{EL_form_coord_free}
\end{equation}%
locally given by\textit{\ }%
\begin{eqnarray*}
E(\lambda ) &=&E_{\sigma }(\lambda )\omega ^{\sigma }\wedge \omega _{0}, \\
E_{\sigma }(\lambda ) &=&\dfrac{\delta \mathcal{L}}{\delta y^{\sigma }}=%
\dfrac{\partial \mathcal{L}}{\partial y^{\sigma }}-d_{i}\dfrac{\partial 
\mathcal{L}}{\partial y_{~i}^{\sigma }}+...+(-1)^{r}d_{i_{1}}...d_{i_{r}}%
\dfrac{\partial \mathcal{L}}{\partial y_{~i_{1}...i_{r}}^{\sigma }}
\end{eqnarray*}%
On-shell, i.e., along critical sections $\gamma ,$ the functions $E_{\sigma
}(\lambda )\circ J^{s+1}\gamma $ vanish. Denoting equality on-shell by $%
\approx $, this is written as:%
\begin{equation*}
E_{\sigma }(\lambda )\circ J^{s+1}\gamma \approx 0.
\end{equation*}%
The Euler-Lagrange form $E(\lambda )$ - though expressed in terms of a
Lepage equivalent $\theta _{\lambda }$ - does not depend on the choice of $%
\theta _{\lambda }.$

\textit{ii)} Let us denote:%
\begin{equation}
\mathcal{J}^{\Xi }:=-h\mathbf{i}_{J^{s}\Xi }\theta _{\lambda }\in \Omega
_{n-1}^{s+1}Y  \label{Noether_current_def}
\end{equation}%
(in local writing, this is: $\mathcal{J}^{\Xi }=J^{i}\omega _{i},~\
J^{i}=J^{i}(x^{k},y^{\sigma },....y_{~i_{1}...i_{s+1}}^{\sigma })$).

Integrating on a compact domain $D\subset X$, the first variation formula
now reads:%
\begin{equation}
\underset{D}{\int }J^{r}\gamma ^{\ast }(\partial _{J^{r}\Xi }\lambda )=%
\underset{D}{\int }J^{s+1}\gamma ^{\ast }\mathbf{i}_{J^{s+1}\Xi }E(\lambda )-%
\underset{\partial D}{\int }J^{s+1}\gamma ^{\ast }\mathcal{J}^{\Xi }.
\label{first_variation_integral}
\end{equation}

The vector field $\Xi \in \mathcal{X}(Y)$ is said to be a \textit{symmetry
generator} for $\lambda $ if $\lambda $ is invariant under the 1-parameter
group of $J^{r}\Xi ;$ this is equivalent to:%
\begin{equation}
\partial _{J^{r}\Xi }\lambda =0.  \label{invariance_condition_1}
\end{equation}%
If $\Xi $ is a symmetry generator for $\lambda ,$ then\ $\mathcal{J}^{\Xi }$
acquires the meaning of \textit{Noether current}.

\textit{Noether's first theorem} states that, if $\Xi \in \mathcal{X}(Y)$ is
a symmetry generator for $\lambda ,$ then, for any Lepage equivalent $\theta
_{\lambda },$ there holds: 
\begin{equation}
J^{s+1}\gamma ^{\ast }d\mathcal{J}^{\Xi }\approx 0.  \label{J_conservation}
\end{equation}

\subsection{Diffeomorphisms and diffeomorphism invariance}

The notion of energy-momentum tensor is tightly related to the invariance of
the matter Lagrangian to the group of spacetime diffeomorphisms $Diff(X).$
But, in order to rigorously define this invariance, some preliminary
discussions are needed.

A diffeomorphism $\Phi :Y\rightarrow Y$ is called, \cite{Krupka-book} an 
\textit{automorphism} of $Y$ if there exists a mapping $\varphi \in Diff(X)$
such that%
\begin{equation}
\pi \circ \Phi =\varphi \circ \pi ;  \label{automorphism}
\end{equation}%
if relation (\ref{automorphism})\ holds, then the automorphism $\Phi $ is
said to \textit{cover }$\varphi .$ An automorphism of $Y\ $is called \textit{%
strict} if it covers the identity of $X.$ We will denote in the following by 
$Aut(Y)$ and $Aut_{s}(Y)$ the sets of automorphisms of $Y$ and,
respectively, strict automorphisms of $Y$.

Passing to infinitesimal generators, any generator $\Xi $ of an automorphism
of $Y$ is a $\pi $-projectable\textit{\ }vector field; in any fibered chart,
it is expressed as:%
\begin{equation}
\Xi =\xi ^{i}(x)\partial _{i}+\Xi ^{\sigma }(x,y)\partial _{\sigma }.
\label{vector_field_Y}
\end{equation}%
Strict automorphisms are generated by \textit{vertical} vector fields $\Xi
=\Xi ^{\sigma }(x,y)\partial _{\sigma }.$

As stated above, we are interested in the way that diffeomorphisms of $X$
affect a Lagrangian $\lambda \in \Omega _{n}^{r}(Y),$ $\lambda =\mathcal{L}%
(x^{i},y^{\sigma },...,y_{~i_{1}...i_{r}}^{\sigma })\omega _{0}.$ But
diffeomorphisms of $X$ do \textit{not} act on the field variables - at
least, not directly; more rigorously stated, $Diff(X$) is not a subgroup of $%
Aut(Y)$ (it is rather a quotient group, \cite{Gotay}, \cite{Forger}, since $%
Diff(X)\simeq Aut(Y)/Aut_{s}(Y)$).

\bigskip

Consequently, in order to be able to speak about diffeomorphism invariance
of the Lagrangian, we need \textit{embeddings\footnote{%
Such liftings canonically exist, for instance, if $Y\overset{\pi }{%
\rightarrow }X$ is a \textit{natural bundle}, \cite{Michor}, i.e., if $Y$ is
obtained from $X$ by applying a covariant functor to the whole category of
differentiable manifolds.} }(or \textit{liftings})\textit{\ }$%
Diff(X)\rightarrow Aut(Y).$

In the following, we will assume that there exists a \textit{canonical}
lifting, given by a group morphism 
\begin{equation}
Diff(X)\rightarrow Aut(Y),~\ \ \varphi \mapsto \Phi ,  \label{lifting_macro}
\end{equation}%
such that, for any $\varphi \in Diff(X):$ $\pi \circ \Phi =\varphi $.

\bigskip

Passing to the infinitesimal level, the canonical lifting gives rise to an $%
\mathbb{R}$-linear module monomorphism%
\begin{equation}
l:\mathcal{X}(X)\rightarrow \mathcal{X}_{P}(Y),~~\xi \mapsto \Xi ,
\label{lift_general}
\end{equation}%
such that $\pi _{\ast |\mathcal{X}_{P}(Y)}\circ l=id_{\mathcal{X}(X)}$. In
local writing, this is given by%
\begin{equation*}
\xi =\xi ^{i}(x^{j})\dfrac{\partial }{\partial x^{i}}~\overset{l}{\mapsto }%
~\Xi =\xi ^{i}(x^{j})\dfrac{\partial }{\partial x^{i}}+\Xi ^{\sigma
}(x^{j},y^{\mu })\dfrac{\partial }{\partial y^{\sigma }}\in \mathcal{X}%
_{P}(Y).
\end{equation*}

\bigskip

The lifting $l$ is required to have the property that $\pi (supp(l(\Xi
)))\subset supp(\xi ),$ where $supp(f)$ denotes the support of a mapping $f$
(defined as the closure of the set $f^{-1}(0)$). As a consequence, \cite%
{Forger}, \cite{Gotay}, in any fibered chart, the local components $\Xi
^{\sigma }$ are expressible as linear combinations of a finite number (say, $%
k$) of partial derivatives of the components $\xi ^{i}:$%
\begin{equation}
\Xi ^{\sigma }=C_{~i}^{\sigma }\xi ^{i}+C_{~i}^{\sigma j}\xi
_{,j}^{i}+....+C_{i}^{\sigma j_{1}...j_{k}}\xi _{,j_{1}...j_{k}}^{i},
\label{Csi_general}
\end{equation}%
where $C_{i}^{\sigma },$ $C_{~i}^{\sigma j},...,C_{i}^{\sigma j_{1}...j_{k}}$
are functions of $(x^{k},y^{\mu })$ only. The number $k\in \mathbb{N}$ is
called the \textit{differential index }(or, simply, the \textit{index})%
\textit{\ }of the lifting.

\bigskip

\textbf{Particular case. }Assume that the index of the lifting $l$ is 1. A
direct calculation shows that, with respect to fibered coordinate changes $%
x^{i}=x^{i}(x^{i^{\prime }}),$ $y^{\sigma }=y^{\sigma }(x^{i^{\prime
}},y^{\sigma ^{\prime }})$, the top degree coefficients $C_{~i}^{\sigma
j}=C_{~i}^{\sigma j}(x^{k},y^{\mu })$\textbf{\ }transform by the rule:%
\begin{equation}
C_{~i^{\prime }}^{\sigma ^{\prime }j^{\prime }}=\dfrac{\partial y^{\sigma
^{\prime }}}{\partial y^{\sigma }}\dfrac{\partial x^{j^{\prime }}}{\partial
x^{j}}\dfrac{\partial x^{i}}{\partial x^{i^{\prime }}}C_{~i}^{\sigma j},
\label{transformation_C}
\end{equation}%
i.e., they define a section\footnote{%
This section represents the so-called \textit{principal symbol\ }of the
differential operator $v\circ l:X(X)\mapsto \mathcal{X}(Y)$ providing, for
any $\xi \in \mathcal{X}(X),$ the $\pi $-vertical component of the lift $%
l(\xi )$.} of the tensor product $TY\otimes TX\otimes T^{\ast }X$.

\bigskip

\textbf{Example. }A typical example of a canonical lifting of (pure) index 1
is obtained if $Y=T^{p,q}(X)$ is the bundle of tensors of type $(p,q)$ over $%
X.$ In this case, any diffeomorphism $\varphi \in Diff(X)$ is canonically
lifted into an automorphism of $T^{p,q}(X)$ by pushforward/pullback.
Denoting the fibered coordinates on $T^{p,q}(X)$ by $%
(x^{i},y_{j_{1}...j_{q}}^{i_{1}...i_{p}}),$ the corresponding mapping $l:%
\mathcal{X}(X)\rightarrow \mathcal{X}(Y)$ is given, \cite{Giachetta1}, by
relation (\ref{vector_field_Y}), with.%
\begin{equation}
\Xi _{j_{1}...j_{q}}^{i_{1}...i_{p}}=\xi
_{,h}^{i_{1}}y_{j_{1}...j_{q}}^{hi_{2}...i_{p}}+...+\xi
_{,h}^{i_{p}}y_{j_{1}...j_{q}}^{i_{1}...i_{p-1}h}-\xi
_{,j_{1}}^{h}y_{hj_{2}...j_{q}}^{i_{1}...i_{p}}-\xi
_{,j_{q}}^{h}y_{j_{1}...j_{q-1}h}^{i_{1}...i_{p}}.  \label{tensor_lifting}
\end{equation}

\bigskip

A differential form $\rho \in \Omega _{k}^{r}Y$ is is said to be \textit{%
diffeomorphism invariant }(or \textit{generally covariant) }if, for any $%
\varphi \in Diff(X),$%
\begin{equation}
J^{r}\Phi ^{\ast }\rho =\rho ,  \label{Diff_invariance}
\end{equation}%
where $\Phi $ denotes the $\varphi $ through the canonical lifting $%
Diff(X)\rightarrow Aut(Y).$

If a differential form $\rho \in \Omega _{k}^{r}Y$ is diffeomorphism
invariant, then the lift $\Xi $ of any vector field $\xi \in \mathcal{X}(X)$
is a symmetry generator for $\lambda .$

\section{The energy-momentum tensor}

Assume, in the following, as in \cite{Giachetta1}, \cite{Giachetta-book},
that the configuration manifold $Y$ is a fibered product%
\begin{equation*}
Y=Y^{(b)}\times _{X}Y^{(m)},
\end{equation*}%
with local coordinates $(x^{i},y^{A},y^{\sigma })$. The coordinates $y^{A}$
will be called \textit{background variables} and $y^{\sigma },$ \textit{%
matter variables. }Consider, on $Y$ a Lagrangian of order $r:$%
\begin{equation}
\lambda =\lambda _{b}+\lambda _{m},  \label{splitting_general}
\end{equation}%
where the \textit{background Lagrangian} $\lambda _{b}$ depends only on $%
y^{A},y_{~i}^{A},...,y_{~i_{1}...i_{r}}^{A}$ and the \textit{matter
Lagrangian}%
\begin{equation*}
\lambda _{m}=\mathcal{L}_{m}(x^{i},y^{A},...,y_{~i_{1}...i_{r}}^{A},y^{%
\sigma },y_{~i}^{\sigma },...,y_{~i_{1}...i_{r}}^{\sigma })\omega _{0}
\end{equation*}%
may depend on all the coordinates on $J^{r}Y.$

\bigskip

We will make the following assumptions:

\textbf{A1. }The lift $l:\mathcal{X}(X)\rightarrow \mathcal{X}(Y),$ $\xi
\mapsto \Xi :=\xi ^{i}(x^{j})\partial _{i}+\Xi ^{A}(x^{j},y^{B})\partial
_{A}+\Xi ^{\sigma }(x^{j},y^{\mu })\partial _{\sigma },$ is of index at at
most 1 in the background variables $y^{A},$ i.e., corresponding to any
fibered chart on $Y$:%
\begin{equation}
\Xi ^{A}=C_{~i}^{A}\xi ^{i}+C_{~i}^{Aj}\xi _{,j}^{i}.  \label{index_1_lift}
\end{equation}

\textbf{A2.} The matter Lagrangian $\lambda _{m}$ is generally covariant.

\textit{Note. }There is no restriction upon the index of the lifting $l$ in
the matter variables $y^{\sigma }$.

\bigskip

The hypothesis A2\textit{\ }implies that, for any vector field $\xi \in 
\mathcal{X}(X),$ $\partial _{J^{r}\Xi }\lambda _{m}=0.$ Let $\theta
_{\lambda _{m}}$ be an arbitrary Lepage equivalent (of order $s$) of $%
\lambda _{m}.$ Using the notations in (\ref{EL_form_coord_free}), (\ref%
{Noether_current_def}), the first variation formula reads:

\begin{equation}
0=J^{s+1}\gamma ^{\ast }\mathbf{i}_{J^{s+1}\Xi }E(\lambda
_{m})-J^{s+1}\gamma ^{\ast }d\mathcal{J}^{\Xi },  \label{first_var_J}
\end{equation}%
for \textit{any} section $\gamma :=(\gamma ^{(b)},\gamma
^{(m)}):X\rightarrow Y,$ locally represented as $\gamma
(x^{i})=(x^{i},y^{A}(x^{i}),y^{\sigma }(x^{i})).$

The Euler-Lagrange form $E(\lambda _{m})=p_{1}d\theta _{\lambda _{m}}$
splits into background\ and matter components as:%
\begin{equation}
E(\lambda _{m})=E^{(b)}(\lambda _{m})+E^{(m)}(\lambda _{m}),
\label{EL_form_splitting}
\end{equation}%
where:%
\begin{equation*}
E^{(b)}(\lambda _{m})=p_{1}^{(b)}d\theta _{\lambda _{m}}=\dfrac{\delta 
\mathcal{L}_{m}}{\delta y^{A}}\omega ^{A}\wedge \omega _{0},~E^{(m)}(\lambda
_{m})=\ p_{1}^{(m)}d\theta _{\lambda _{m}}=\dfrac{\delta \mathcal{L}_{m}}{%
\delta y^{\sigma }}\omega ^{\sigma }\wedge \omega _{0}.
\end{equation*}

The background component $E^{(b)}(\lambda _{m})$ in (\ref{EL_form_splitting}%
) will be used as a "raw material"\ for the energy-momentum tensor.

\begin{definition}
The \textbf{energy-momentum}\textit{\ }\textbf{source form}\footnote{%
In \cite{Krupka-em-tensors}, the source form $\tau $ is called the
energy-momentum \textit{tensor. }Still, we will prefer here a slightly
different terminology, for reasons which will be clarified below.} of $%
\lambda _{m}$ is the Euler-Lagrange form of $\lambda _{m}$ with respect to
the background variables: $\tau :=E^{(b)}(\lambda _{m}).$
\end{definition}

In local writing, 
\begin{equation*}
\tau =\tau _{A}\omega ^{A}\wedge \omega _{0}\in \Omega _{n+1}^{s+1}(Y),
\end{equation*}%
where:%
\begin{equation}
\tau _{A}=\dfrac{\delta \mathcal{L}_{m}}{\delta y^{A}}.  \label{def_tau}
\end{equation}

\textbf{Remark. }In \cite{Krupka-book}, p. 118, it is proven the following
result. For any automorphism $\Phi \in Aut(Y)$ and any Lagrangian $\lambda
\in \Omega _{n}^{r}Y,$%
\begin{equation}
J^{s+1}\Phi ^{\ast }E(\lambda )=E(J^{r}\Phi ^{\ast }\lambda );
\label{invar_EL_form}
\end{equation}%
in particular, if $\lambda $ is $\Phi $-invariant, then so is its
Euler-Lagrange form $E(\lambda ).$ Using this result, we find out that, if
the Lagrangian $\lambda _{m}\in \Omega _{n}^{r}Y$ is $\Phi $-invariant, then 
$\tau $ is also $\Phi $-invariant.

\bigskip

Taking into account the splitting (\ref{EL_form_splitting}), the first
variation formula (\ref{first_var_J}) becomes:%
\begin{equation}
0=J^{s+1}\gamma ^{\ast }\left[ \mathbf{i}_{J^{s+1}\Xi }E^{(m)}(\lambda _{m})+%
\mathbf{i}_{J^{s+1}\Xi }\tau \right] -J^{s+1}\gamma ^{\ast }d\mathcal{J}%
^{\Xi }.  \label{first_variation_split}
\end{equation}%
On-shell for the matter variables $y^{\sigma },$ i.e., when the matter
component $\gamma ^{(m)}$ of the section $\gamma =(\gamma ^{(b)},\gamma
^{(m)}):X\rightarrow Y,$ $(x^{i})\mapsto (y^{A}(x^{i}),y^{\sigma }(x^{i}))$
is critical for $\lambda _{m},$ we obtain:%
\begin{equation}
J^{s+1}\gamma ^{\ast }(\mathbf{i}_{J^{s+1}\Xi }\tau )-J^{s+1}\gamma ^{\ast }d%
\mathcal{J}^{\Xi }\approx _{y^{\sigma }}0,  \label{first_variation_1}
\end{equation}%
or, integrating on a compact domain $D\subset X$ (and applying the
horizontalization operator):%
\begin{equation}
\underset{D}{\int }J^{s+2}\gamma ^{\ast }(h\mathbf{i}_{J^{s+1}\Xi }\tau )-%
\underset{\partial D}{\int }J^{s+1}\gamma ^{\ast }d\mathcal{J}^{\Xi }\approx
_{y^{\sigma }}0.  \label{first_variation_2}
\end{equation}

\bigskip

Let us investigate more closely the volume term in the above intergral.

\begin{lemma}
If the embedding $l:\mathcal{X}(X)\rightarrow \mathcal{X}(Y),$ $\xi \mapsto
\Xi ,$ is of index at most 1 in the background variables $y^{A},$ then there
uniquely exist the $\mathcal{F}(X)$-linear mappings $\mathcal{B}:\mathcal{X}%
(X)\rightarrow \Omega _{n}^{s+2}Y$ and $\mathcal{T}:\mathcal{X}%
(X)\rightarrow \Omega _{n-1}^{s+1}Y$ with $\pi ^{r}$-horizontal values,
satisfying: 
\begin{equation}
h\mathbf{i}_{J^{s+1}\Xi }\tau =\mathcal{B}(\xi )+hd(\mathcal{T}(\xi )),~\ \
\ \forall \xi \in \mathcal{X}(X).  \label{def_BT}
\end{equation}
\end{lemma}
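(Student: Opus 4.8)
The plan is to compute the left-hand side $h\mathbf{i}_{J^{s+1}\Xi}\tau$ explicitly in a fibered chart, read off the part that depends on the derivatives of $\xi$, and strip it away by a single integration by parts. The index-$1$ hypothesis (\ref{index_1_lift}) is precisely what makes one integration by parts suffice and what keeps both resulting maps tensorial in $\xi$.

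First I would expand the interior product. Writing $\tau = \tau_A\,\omega^A\wedge\omega_0$ and using $\mathbf{i}_{J^{s+1}\Xi}\omega^A = \Xi^A - y^A_{~j}\xi^j =: \bar\Xi^A$ together with $\mathbf{i}_{J^{s+1}\Xi}\omega_0 = \xi^i\omega_i$, I obtain
$$\mathbf{i}_{J^{s+1}\Xi}\tau = \tau_A\bar\Xi^A\,\omega_0 - \tau_A\xi^i\,\omega^A\wedge\omega_i.$$
Applying $h$ and using that $h$ annihilates the contact form $\omega^A$, the second term drops out, leaving $h\mathbf{i}_{J^{s+1}\Xi}\tau = \tau_A\bar\Xi^A\,\omega_0$ (up to the usual pullback by $\pi^{s+2,s+1}$). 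Substituting the index-$1$ expression $\Xi^A = C^A_{~i}\xi^i + C^{Aj}_{~i}\xi^i_{,j}$ gives
$$h\mathbf{i}_{J^{s+1}\Xi}\tau = \big[\tau_A(C^A_{~i} - y^A_{~i})\xi^i + \tau_A C^{Aj}_{~i}\xi^i_{,j}\big]\,\omega_0.$$

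Next I would isolate the only non-tensorial piece, the term carrying $\xi^i_{,j}$. Since $\xi^i$ depends on $x$ alone, $\xi^i_{,j} = d_j\xi^i$, so the Leibniz rule for the formal derivative yields $\tau_A C^{Aj}_{~i}\xi^i_{,j} = d_j(\tau_A C^{Aj}_{~i}\xi^i) - \xi^i\,d_j(\tau_A C^{Aj}_{~i})$. Using the identity $hd(g\,\omega_j) = (d_j g)\,\omega_0$, the divergence term is exactly $hd$ of the $(n-1)$-form $\mathcal{T}(\xi) := \tau_A C^{Aj}_{~i}\xi^i\,\omega_j$. Collecting the remaining tensorial terms into
$$\mathcal{B}(\xi) := \big[\tau_A(C^A_{~i} - y^A_{~i}) - d_j(\tau_A C^{Aj}_{~i})\big]\xi^i\,\omega_0$$
gives the claimed decomposition $h\mathbf{i}_{J^{s+1}\Xi}\tau = \mathcal{B}(\xi) + hd(\mathcal{T}(\xi))$. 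Both maps take values in the $\omega_0$- resp.\ $\omega_j$-spans, hence are horizontal; both are manifestly $\mathcal{F}(X)$-linear in $\xi$, since no derivative of $\xi$ survives in either; and order-counting gives $\mathcal{T}(\xi)\in\Omega^{s+1}_{n-1}Y$ and $\mathcal{B}(\xi)\in\Omega^{s+2}_nY$, because $\tau_A$ is of order $s+1$ and the extra $d_j$ raises this to $s+2$.

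For uniqueness I would take the difference $(\mathcal{B},\mathcal{T})$ of two such pairs, so that $\mathcal{B}(\xi) = -hd(\mathcal{T}(\xi))$. Writing a general horizontal, $\mathcal{F}(X)$-linear $(n-1)$-form as $\mathcal{T}(\xi) = T^j_{~i}\xi^i\omega_j$, I compute $hd(\mathcal{T}(\xi)) = (T^j_{~i}\xi^i_{,j} + \xi^i d_j T^j_{~i})\,\omega_0$; since the left side $\mathcal{B}(\xi)$ is tensorial in $\xi$, the coefficient of $\xi^i_{,j}$ must vanish, and choosing at each point of $J^{s+1}Y$ a field $\xi$ realizing arbitrary values of $\xi^i_{,j}$ forces $T^j_{~i} = 0$, hence $\mathcal{T} = 0$ and then $\mathcal{B} = 0$. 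Global well-definedness then follows for free: the left-hand side $h\mathbf{i}_{J^{s+1}\Xi}\tau$ is a globally defined object ($\tau$ is a global source form and $\Xi = l(\xi)$ a global field), so local uniqueness forces the chart expressions of $\mathcal{B}$ and $\mathcal{T}$ to agree on overlaps and glue; alternatively this can be checked directly from the transformation rules (\ref{transf_source_form}) and (\ref{transformation_C}). The only genuine conceptual point — the main obstacle — is this uniqueness step, where one must see that demanding $\mathcal{F}(X)$-linearity of \emph{both} maps pins the splitting down; here the index-$1$ hypothesis is indispensable, since a higher index would push $\xi$-derivatives into $\mathcal{T}$ and destroy its tensoriality. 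Everything else is a direct computation once the contact term is seen to die under $h$.
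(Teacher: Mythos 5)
Your proof is correct and follows essentially the same route as the paper's: the same chart computation of $h\mathbf{i}_{J^{s+1}\Xi }\tau =\tilde{\Xi}^{A}\tau _{A}\,\omega _{0}$, the same single integration by parts yielding the same local formulas for $\mathcal{B}$ and $\mathcal{T}$, and the same coefficient-comparison argument for uniqueness. The only cosmetic difference is that you obtain global well-definedness from local uniqueness on chart overlaps, whereas the paper checks the transformation laws of $\tau _{A}$, $C_{~i}^{Aj}$ and $\omega _{j}$ directly; both are valid.
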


\begin{proof}
We first define $\mathcal{B}$ and $\mathcal{T}$ locally. Take an arbitrary
fibered chart $(V^{s+2},\psi ^{s+2})$ on $J^{s+2}Y$, with $U=\pi
^{s+2}(V^{s+2}).$ In this chart, $h\mathbf{i}_{J^{s+1}\Xi }\tau $ is
expressed as:%
\begin{equation}
h\mathbf{i}_{J^{s+1}\Xi }\tau =(\tilde{\Xi}^{A}\tau _{A})\omega _{0},~\ \ \
\ \ \tilde{\Xi}^{A}=\Xi ^{A}-y_{~i}^{A}\xi ^{i}.  \label{EL_term}
\end{equation}%
With $\Xi ^{A}$ from (\ref{index_1_lift}), we find 
\begin{eqnarray}
&&h\mathbf{i}_{J^{s+1}\Xi }\tau =\{(C_{~i}^{A}-y_{~i}^{A})\tau _{A}\xi
^{i}+C_{~i}^{Aj}\tau _{A}\xi _{~,j}^{i}\}\omega _{0}  \label{h_tau} \\
&=&\{[(C_{~i}^{A}-y_{~i}^{A})\tau _{A}-d_{j}(C_{~i}^{Aj}\tau _{A})]\xi
^{i}+d_{j}(C_{~i}^{Aj}\tau _{A}\xi ^{i})\}\omega _{0}.  \notag
\end{eqnarray}%
Denoting 
\begin{eqnarray}
\mathcal{B}(\xi ) &=&[(C_{~i}^{A}-y_{~i}^{A})\tau _{A}-d_{j}(C_{~i}^{Aj}\tau
_{A})]\xi ^{i}\omega _{0},  \label{B_local} \\
\mathcal{T}(\xi ) &:&=(C_{~i}^{Aj}\tau _{A}\xi ^{i})\omega _{j},~\ \ ~\ \ \
\forall \xi \in \mathcal{X}(U),  \label{T_local1}
\end{eqnarray}%
we obtain two linear mappings $\mathcal{B}:\mathcal{X}(U)\mapsto \Omega
_{n}^{s+2}(Y)$ and $\mathcal{T}:\mathcal{X}(U)\mapsto \Omega
_{n-1}^{s+1}(Y), $ having horizontal values and obeying (\ref{def_BT}).

The uniqueness of the (momentarily, locally defined) mappings $\mathcal{B}$
and $\mathcal{T}$ can be established as follows. Assume that the mappings $%
\mathcal{\tilde{B}}:\mathcal{X}(U)\mapsto \Omega _{n}^{s+2}(Y),$ $\mathcal{%
\tilde{T}}:\mathcal{X}(U)\mapsto \Omega _{n-1}^{s+1}(Y)$ also obey the above
properties. Then, for any $\xi \in \mathcal{X}(X),$ we have:%
\begin{equation*}
0=(\mathcal{B-\tilde{B}})(\xi )+hd[(\mathcal{T-\tilde{T}})(\xi )].
\end{equation*}%
As all, these mappings have horizontal values, they can be expressed as: $%
\mathcal{B}(\xi )=\mathcal{B}_{i}\xi ^{i}\omega _{0}$, $\mathcal{\tilde{B}}%
(\xi )=\mathcal{\tilde{B}}_{i}\xi ^{i}\omega _{0},$ $\mathcal{T}(\xi )=%
\mathcal{T}_{~i}^{j}\xi ^{i}\omega _{j},$ $\mathcal{\tilde{T}}(\xi )=%
\mathcal{\tilde{T}}_{~i}^{j}\xi ^{i}\omega _{j}.$ Substituting into the
above equality,%
\begin{equation*}
0=[(\mathcal{B}_{i}\mathcal{-\tilde{B}}_{i})+d_{j}(\mathcal{T}_{~i}^{j}%
\mathcal{-\tilde{T}}_{~i}^{j})]\xi ^{i}+(\mathcal{T}_{~i}^{j}\mathcal{-%
\tilde{T}}_{~i}^{j})\xi _{,j}^{i}.
\end{equation*}%
As this relation holds for \textit{any }$\xi ,$ we obtain $\mathcal{T}%
_{~i}^{j}\mathcal{-\tilde{T}}_{~i}^{j}=0$ and $\mathcal{B}_{i}\mathcal{-%
\tilde{B}}_{i}=0.$ Therefore, $\mathcal{B=\tilde{B}}$ and $\mathcal{T-\tilde{%
T}}.$

Now, take two fibered chart domains $V^{s+2},V^{s+2^{\prime }}\subset
J^{s+2}Y,$ with $U=\pi ^{s+2}(V^{s+2}),U^{\prime }=\pi ^{s+2}(V^{s+2^{\prime
}})$ and an arbitrary vector field $\xi \in \mathcal{X}(U\cap U^{\prime }).$
We denote by $\mathcal{T}$ and $\mathcal{T}^{\prime }$ the mappings
corresponding by (\ref{T_local1}) to the two domains. Taking into account
the rules of transformation (\ref{transf_source_form}), (\ref%
{transformation_C}), (\ref{transf_omega0}) of $\tau _{A},$ $C_{~i}^{Aj}$ and 
$\omega _{i}$, a brief calculation shows that:%
\begin{equation*}
\mathcal{T}(\xi )=(C_{~i}^{Aj}\tau _{A}\xi ^{i})\omega _{j}=(C_{~i^{\prime
}}^{A^{\prime }j^{\prime }}\tau _{A^{\prime }}\xi ^{i^{\prime }})\omega
_{j^{\prime }}=\mathcal{T}^{\prime }(\xi )\ 
\end{equation*}%
i.e., the mapping $\mathcal{T}$ can be defined globally on $\mathcal{X}(X)$.
As a consequence of (\ref{def_BT}), $\mathcal{B}$ is also globally well
defined.
\end{proof}

Therefore, it makes sense

\begin{definition}
\label{def_T}The \textbf{energy-momentum tensor }of $\lambda _{m}$ is the
mapping:%
\begin{equation}
\mathcal{T}:\mathcal{X}(X)\rightarrow \Omega _{n-1}^{s+1}(Y),~\ \ \xi
\mapsto \mathcal{T}(\xi ),  \label{T_def}
\end{equation}%
uniquely defined by the splitting 
\begin{equation}
h\mathbf{i}_{J^{s+1}\Xi }\tau =\mathcal{B}(\xi )+hd(\mathcal{T}(\xi )),
\label{tau_splitting}
\end{equation}%
where the mappings $\mathcal{T}:\mathcal{X}(X)\rightarrow \Omega
_{n-1}^{s+1}(Y),$ $\xi \mapsto \mathcal{T}(\xi )$ and $\mathcal{B}:\mathcal{X%
}(X)\rightarrow \Omega _{n}^{s+2}(Y),$ $\xi \mapsto \mathcal{B}(\xi )$ are $%
\mathcal{F}(X)$-linear and have $\pi ^{r}$-horizontal values.
\end{definition}

In the following, we will call the mapping $\mathcal{B}$ defined by (\ref%
{tau_splitting}), the \textit{balance function.}

\begin{corollary}
In local writing, the energy-momentum tensor is given by:%
\begin{equation}
\xi ^{i}\partial _{i}\mapsto \mathcal{T}(\xi )=\mathcal{T}_{~i}^{j}\xi
^{j}\omega _{j},  \label{local_T}
\end{equation}%
where%
\begin{equation}
\mathcal{T}_{~i}^{j}=C_{~i}^{Aj}\tau _{A}=C_{~i}^{Aj}\dfrac{\delta \mathcal{L%
}_{m}}{\delta y^{A}}.  \label{T_comps}
\end{equation}%
With respect to fibered coordinate changes on $J^{s+1}Y$, the functions $%
\mathcal{T}_{~i}^{j}$ obey the rule:%
\begin{equation}
\mathcal{T}_{~i}^{j}=\dfrac{\partial x^{j}}{\partial x^{j^{\prime }}}\dfrac{%
\partial x^{i^{\prime }}}{\partial x^{i}}\det (\dfrac{\partial x^{k^{\prime
}}}{\partial x^{k}})\mathcal{T}_{~i^{\prime }}^{j^{\prime }}.
\label{trans_T_comps}
\end{equation}
\end{corollary}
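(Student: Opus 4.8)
The plan is to treat this purely as a corollary of the preceding Lemma: the first assertion is read off directly from the local formula constructed in the proof, and the transformation rule follows by substituting the two transformation laws already recorded in the excerpt and checking that the fibre Jacobians cancel.

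First I would observe that the local expression needs no new computation. The proof of the Lemma defines $\mathcal{T}(\xi)$ locally by (\ref{T_local1}) as $\mathcal{T}(\xi)=(C_{~i}^{Aj}\tau_A\xi^i)\omega_j$; comparing this with (\ref{local_T}) one reads off $\mathcal{T}_{~i}^{j}=C_{~i}^{Aj}\tau_A$, and inserting the definition (\ref{def_tau}) of $\tau_A=\delta\mathcal{L}_m/\delta y^A$ gives (\ref{T_comps}) at once. So the only substantive content is the transformation rule (\ref{trans_T_comps}).

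For that rule I would substitute into $\mathcal{T}_{~i}^{j}=C_{~i}^{Aj}\tau_A$ the transformation law (\ref{transformation_C}) of the top-degree coefficients $C_{~i}^{Aj}$ and the transformation law (\ref{transf_source_form}) of the source-form components $\tau_A$, both specialized to background indices. Written in the primed chart, the substitution reads
\begin{equation*}
\mathcal{T}_{~i'}^{j'}=C_{~i'}^{A'j'}\tau_{A'}=\left(\dfrac{\partial y^{A'}}{\partial y^A}\dfrac{\partial x^{j'}}{\partial x^j}\dfrac{\partial x^i}{\partial x^{i'}}C_{~i}^{Aj}\right)\left(\dfrac{\partial y^B}{\partial y^{A'}}\det\left(\dfrac{\partial x^k}{\partial x^{k'}}\right)\tau_B\right).
\end{equation*}
Summing over $A'$, the two fibre Jacobians contract to $\dfrac{\partial y^{A'}}{\partial y^A}\dfrac{\partial y^B}{\partial y^{A'}}=\delta_A^B$, and what survives is
\begin{equation*}
\mathcal{T}_{~i'}^{j'}=\dfrac{\partial x^{j'}}{\partial x^j}\dfrac{\partial x^i}{\partial x^{i'}}\det\left(\dfrac{\partial x^k}{\partial x^{k'}}\right)\mathcal{T}_{~i}^{j},
\end{equation*}
which is exactly (\ref{trans_T_comps}) after interchanging the roles of primed and unprimed labels. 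This displays $\mathcal{T}_{~i}^{j}$ as the components of a $(1,1)$-tensor density of weight $1$, so that $\mathcal{T}=\mathcal{T}_{~i}^{j}\,\partial_j\otimes dx^i$ is the globally defined object already guaranteed by the Lemma.

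The whole computation is routine, and the only point demanding care is precisely the cancellation of the fibre Jacobians $\dfrac{\partial y^{A'}}{\partial y^A}$ and $\dfrac{\partial y^B}{\partial y^{A'}}$. This step uses assumption A1 (index at most $1$ in the background variables, so that the single coefficient $C_{~i}^{Aj}$ carries the entire contribution) together with the fibered-product structure $Y=Y^{(b)}\times_X Y^{(m)}$, which ensures that admissible coordinate changes keep the background indices $A$ separate from the matter indices $\sigma$; only then are the two fibre Jacobians genuine mutually inverse matrices. Granting this, the remaining three spacetime Jacobians and the determinant reassemble into the stated tensor-density law, and the corollary is proved.
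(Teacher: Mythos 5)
Your proposal is correct and follows essentially the same route as the paper: the local formula is read off from the expression $(C_{~i}^{Aj}\tau_A\xi^i)\omega_j$ constructed in the Lemma's proof, and the transformation rule is exactly the "brief calculation" the paper invokes there, combining (\ref{transf_source_form}) and (\ref{transformation_C}) so that the fibre Jacobians cancel and the spacetime Jacobians with the determinant remain. Your added remark on why the fibre Jacobians are mutually inverse (index-1 lift plus the fibered-product structure) is a correct and slightly more explicit justification than the paper gives.
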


\textbf{Remarks. }

1)\ The local expression (\ref{T_comps}) of $\mathcal{T}$ coincides, up to a
minus sign (and a pullback by sections of $J^{s+1}Y$), to the one found by
Gotay and Marsden, \cite{Gotay}, as a result of a different, Noether-type
construction. The reason for our choice of the sign in (\ref{def_BT}) (i.e.,
also in (\ref{T_comps})) is that it gives, in the case of purely metric
backgrounds, the usual Hilbert energy-momentum tensor, with a correct sign.

2)\ Here, the energy-momentum tensor $\mathcal{T}$ is regarded as a
geometric object on the jet bundle $J^{s+1}Y$ and not on the base manifold $%
X $. For each section $\gamma \in \Gamma (Y),$ a corresponding linear
mapping $\mathcal{T}_{(\gamma )}:\mathcal{X}(X)\mapsto \Omega _{n-1}(X)$ can
be obtained from $\mathcal{T}$ by pullback: $\mathcal{T}_{(\gamma )}(\xi
):=J^{s+1}\gamma ^{\ast }\mathcal{T}(\xi ),$ $\forall \xi \in \mathcal{X}%
(X). $ This way, the energy-momentum tensor "on $X$" $\mathcal{T}_{(\gamma )}
$ is regarded as an element of $\Omega _{1}(X)\otimes \Omega _{n-1}(X).$

3)\ The rule of transformation (\ref{trans_T_comps}) of its coefficients is
due to the fact that $\mathcal{T}$ is expressed in the non-invariant local
basis $dx^{i}\otimes \omega _{j}.$ If $\omega _{j}$ are replaced with an
invariant basis for horizontal $n-1$ forms, then the corresponding local
components of $\mathcal{T}$ will obey a tensor-type transformation rule.

4)\ Equivalent Lagrangians give rise to the same energy-momentum tensor $%
\mathcal{T};$ this is a consequence of the fact that $\mathcal{T}$ is a
combination of Euler-Lagrange expressions of $\lambda _{m}.$

5)\ For energy-momentum source forms $\tau $ whose order in the background
variables $y^{A}$ does not exceed 1, a complete characterization (as a
polynomial in these variables) is available, \cite{Krupka-em-tensors}, \cite%
{inv-problem-book}. This gives us immediately a general characterization of
the corresponding energy-momentum tensors $\mathcal{T}.$

\section{Properties of the energy-momentum tensor}

Using Definition \ref{def_T}, we will prove:

\begin{theorem}
\label{Hilbert_SEM_prop}(i) (The energy-momentum balance law): In any
fibered chart of $J^{s+2}Y,$ there hold the relations:%
\begin{equation}
\left( d_{j}\mathcal{T}_{~i}^{j}-(C_{~i}^{A}-y_{~i}^{A})\tau _{A}\right)
\circ J^{s+2}\gamma \approx _{y^{\sigma }}0,~\ \ i=1,...,n.
\label{SEM_balance_relation}
\end{equation}%
where $\approx _{y^{\sigma }}$ means equality on-shell for the matter
component $\gamma _{m}:(x^{i})\mapsto (y^{\sigma }(x^{i}))$ of the section $%
\gamma =(\gamma ^{(b)},\gamma ^{(m)}):X\rightarrow Y.$

(ii) For any vector field $\xi \in \mathcal{X}(X)$ and any compact domain $%
D\subset X:$ 
\begin{equation}
\underset{\partial D}{\int }J^{s+1}\gamma ^{\ast }\mathcal{T}(\xi )\approx
_{y^{\sigma }}\underset{\partial D}{\int }J^{s+1}\gamma ^{\ast }\mathcal{J}%
^{l(\xi )},  \label{correct_T_variational}
\end{equation}%
where $l:\mathcal{X}(X)\rightarrow \mathcal{X}(Y)$ denotes the canonical
lift.
\end{theorem}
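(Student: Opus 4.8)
The plan is to funnel both statements out of the single matter-on-shell relation (\ref{first_variation_1}), after rewriting it entirely in terms of horizontal forms of order $s+2$ so that the decomposition (\ref{tau_splitting}) of the Lemma can be substituted directly. Applying the horizontalization property (\ref{property_h}) to the first term of (\ref{first_variation_1}) and inserting $h\mathbf{i}_{J^{s+1}\Xi}\tau=\mathcal{B}(\xi)+hd\mathcal{T}(\xi)$, I obtain the pulled-back relation
\[
J^{s+2}\gamma^{\ast}\mathcal{B}(\xi)+J^{s+1}\gamma^{\ast}d\mathcal{T}(\xi)-J^{s+1}\gamma^{\ast}d\mathcal{J}^{\Xi}\approx_{y^{\sigma}}0,
\]
valid for every $\xi\in\mathcal{X}(X)$, with $\Xi=l(\xi)$. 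This is the common source of (i) and (ii): the work consists in extracting its \emph{pointwise} content for (i) and its \emph{boundary} content for (ii).

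For part (i), I would integrate this relation over a compact domain $D$ and dispose of the two total-derivative terms by Stokes, testing against vector fields $\xi$ whose support lies in the interior of $D$. Then $\mathcal{T}(\xi)=\mathcal{T}_{~i}^{j}\xi^{i}\omega_{j}$, being proportional to $\xi^{i}$, vanishes near $\partial D$, while the support condition on the lift $l$, namely $\pi(\mathrm{supp}\,l(\xi))\subset\mathrm{supp}\,\xi$, forces $\Xi$ and hence $\mathcal{J}^{\Xi}$ to vanish near $\partial D$ as well; both boundary integrals drop out, leaving $\int_{D}J^{s+2}\gamma^{\ast}\mathcal{B}(\xi)\approx_{y^{\sigma}}0$. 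Writing $\mathcal{B}(\xi)=\mathcal{B}_{i}\xi^{i}\omega_{0}$ with $\mathcal{B}_{i}=(C_{~i}^{A}-y_{~i}^{A})\tau_{A}-d_{j}\mathcal{T}_{~i}^{j}$ as in (\ref{B_local}), (\ref{T_comps}), and invoking the fundamental lemma of the calculus of variations for the $n$ independent, freely compactly supported components $\xi^{i}$, I conclude that $\mathcal{B}_{i}\circ J^{s+2}\gamma\approx_{y^{\sigma}}0$, which is exactly (\ref{SEM_balance_relation}).

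Part (ii) then follows with essentially no extra work, now allowing $\xi$ to be arbitrary. Integrating the same relation over $D$ and applying Stokes to the two total-derivative terms gives
\[
\int_{D}J^{s+2}\gamma^{\ast}\mathcal{B}(\xi)+\int_{\partial D}J^{s+1}\gamma^{\ast}\mathcal{T}(\xi)\approx_{y^{\sigma}}\int_{\partial D}J^{s+1}\gamma^{\ast}\mathcal{J}^{\Xi}.
\]
By part (i) the integrand $\mathcal{B}_{i}\circ J^{s+2}\gamma$ vanishes on matter-shell, so the bulk term disappears and, recalling $\Xi=l(\xi)$, I am left precisely with (\ref{correct_T_variational}).

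The step I expect to be delicate is the justification of the vanishing boundary terms in (i): I must verify that the support condition on the lift is exactly what guarantees $\mathcal{J}^{\Xi}$ vanishes near $\partial D$, and that the fundamental-lemma argument is legitimate, i.e.\ that for a fixed matter-critical section the pulled-back coefficient $\mathcal{B}_{i}\circ J^{s+2}\gamma$ is a genuine smooth function on $X$ against which arbitrary compactly supported $\xi^{i}$ may be tested independently. Everything else is bookkeeping with the horizontalization identity (\ref{property_h}) and the already-established local formulas (\ref{B_local})--(\ref{T_comps}).
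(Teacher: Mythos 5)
Your argument is correct and follows essentially the same route as the paper: substitute the splitting $h\mathbf{i}_{J^{s+1}\Xi}\tau=\mathcal{B}(\xi)+hd(\mathcal{T}(\xi))$ into the on-shell first variation formula, kill the boundary terms with compactly supported $\xi$ and apply the fundamental lemma to get (i), then use (i) with arbitrary $\xi$ to get (ii). Your explicit attention to why $\mathcal{J}^{\Xi}$ vanishes near $\partial D$ (via the support condition on the lift) and to the legitimacy of the fundamental-lemma step only makes explicit what the paper leaves implicit.
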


\begin{proof}
Take an arbitrary vector field $\xi \in \mathcal{X}(X)$ and denote $\Xi
:=l(\xi ).$ Using the splitting (\ref{tau_splitting}) together with the
integral first variation formula (\ref{first_variation_2}), we get:%
\begin{equation}
0\approx _{y^{\sigma }}\underset{D}{\int }J^{s+2}\gamma ^{\ast }\mathcal{B}%
(\xi )+\underset{\partial D}{\int }J^{s+1}\gamma ^{\ast }(\mathcal{T}(\xi )-%
\mathcal{J}^{\Xi }).  \label{global_T_variation}
\end{equation}%
Equation (\ref{global_T_variation}) holds, in particular, for any vector
field $\xi $ with support contained in $D,$ i.e., $\xi _{|\partial D}=0.$ It
follows that%
\begin{equation}
\underset{D}{\int }J^{s+2}\gamma ^{\ast }\mathcal{B}(\xi )\approx
_{y^{\sigma }}0;  \label{balance_law_global}
\end{equation}%
using the local writing (\ref{B_local}) of the balance function $\mathcal{B}$%
, this gives \textit{(i).} Then, choosing $\xi $ with $\xi _{|\partial
D}\not=0,$ and using \textit{(i)}, together with (\ref{global_T_variation}),
we find \textit{(ii)}.
\end{proof}

\textbf{Remarks. }

1) Property \textit{(ii)} tells us that $\mathcal{T}(\xi )$ coincides
on-shell with the "improved Noether current" given by the diffeomorphism
invariance of $\lambda _{m}$.

2)\ The order of the Lagrangian $\lambda _{m}$ either in the background
variables or in the matter ones was completely irrelevant in deducing the
above results. That is, they are valid also in the case of non-minimal
coupling between the background and the matter variables.

3)\ Since it is defined as a linear combination of Euler-Lagrange
expressions $\tau _{A},$ the energy-momentum tensor $\mathcal{T}$ does not
depend on the choice of the Lepage equivalent $\theta _{\lambda _{m}}$ (on
the contrary, Noether currents do depend on the choice of the Lepage
equivalent).

4)\ If we apply the above construction to the background Lagrangian $\lambda
_{b},$ which only depends on the background variables, then the
corresponding energy-momentum balance relations (\ref{SEM_balance_relation})
will be identically satisfied for \textit{any} section $\gamma \in \Gamma
(Y);$ as we will in the next section, in the particular case of general
relativity, these are actually, the contracted Bianchi identities.

\begin{theorem}
(Gauge invariance of $\mathcal{T}$): If a strict automorphism $\Phi \in
Aut_{s}(Y^{(b)}\times _{X}Y^{(m)})$ is a symmetry of $\lambda $ acting
trivially on the background manifold $Y^{(b)},$ then:%
\begin{equation}
J^{s+1}\Phi ^{\ast }\mathcal{T}(\xi )=\mathcal{T}(\xi ),\forall \xi \in 
\mathcal{X}(X).  \label{gauge_invariance_T}
\end{equation}
\end{theorem}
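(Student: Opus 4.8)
The plan is to reduce the statement to the explicit local formula (\ref{T_comps}), $\mathcal{T}_{~i}^{j}=C_{~i}^{Aj}\tau _{A}$, and to show that each of the two factors $C_{~i}^{Aj}$ and $\tau _{A}$ is separately invariant under $J^{s+1}\Phi $. First I would establish that the matter Lagrangian itself is $\Phi $-invariant. Since $\Phi $ acts trivially on $Y^{(b)}$ and is strict (covers $\mathrm{id}_{X}$), its prolongation $J^{r}\Phi $ fixes every background coordinate $x^{i},y^{A},y_{~i}^{A},\dots ,y_{~i_{1}\dots i_{r}}^{A}$; as $\lambda _{b}$ depends on these alone, $J^{r}\Phi ^{\ast }\lambda _{b}=\lambda _{b}$. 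Combining this with the hypothesis that $\Phi $ is a symmetry of $\lambda $ (so $J^{r}\Phi ^{\ast }\lambda =\lambda $) and the splitting (\ref{splitting_general}) gives $J^{r}\Phi ^{\ast }\lambda _{m}=J^{r}\Phi ^{\ast }(\lambda -\lambda _{b})=\lambda _{m}$.

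Next I would invoke the Remark containing (\ref{invar_EL_form}): from $J^{r}\Phi ^{\ast }\lambda _{m}=\lambda _{m}$ it follows that the Euler--Lagrange form $E(\lambda _{m})$, and hence the energy-momentum source form $\tau =E^{(b)}(\lambda _{m})$, is $\Phi $-invariant, i.e. $J^{s+1}\Phi ^{\ast }\tau =\tau $. Because $\Phi $ is strict and trivial on the background, the forms $\omega _{0}$ and $\omega ^{A}=dy^{A}-y_{~j}^{A}dx^{j}$ are fixed by $J^{s+1}\Phi $; writing $\tau =\tau _{A}\,\omega ^{A}\wedge \omega _{0}$ and comparing coefficients then yields $\tau _{A}\circ J^{s+1}\Phi =\tau _{A}$. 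For the other factor, recall from assumption \textbf{A1}, relation (\ref{index_1_lift}), that the coefficients $C_{~i}^{Aj}=C_{~i}^{Aj}(x^{k},y^{B})$ of the index-$1$ lift depend only on the background coordinates; since $\Phi $ leaves these untouched, $C_{~i}^{Aj}\circ J^{s+1}\Phi =C_{~i}^{Aj}$.

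With both factors invariant, $\mathcal{T}_{~i}^{j}=C_{~i}^{Aj}\tau _{A}$ is an invariant function, $\mathcal{T}_{~i}^{j}\circ J^{s+1}\Phi =\mathcal{T}_{~i}^{j}$. Finally, using that the components $\xi ^{i}=\xi ^{i}(x)$ live on $X$ and that $J^{s+1}\Phi ^{\ast }\omega _{j}=\omega _{j}$ (again because $\Phi $ is strict), the local expression (\ref{local_T}) gives $J^{s+1}\Phi ^{\ast }\mathcal{T}(\xi )=(\mathcal{T}_{~i}^{j}\circ J^{s+1}\Phi )\,\xi ^{i}\,\omega _{j}=\mathcal{T}(\xi )$, which is (\ref{gauge_invariance_T}). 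The step I expect to be the main obstacle --- and the one deserving explicit justification --- is the passage from $\Phi $-invariance of the full Euler--Lagrange form $E(\lambda _{m})$ to that of its background component $\tau $ alone: this requires that $J^{s+1}\Phi $ preserve the separation between background and matter contact forms, which holds precisely because $\Phi $ acts trivially on $Y^{(b)}$ (so $\omega ^{A}$ is fixed and the matter contact forms transform among themselves, with no $\omega ^{A}$ leaking in). A fully coordinate-free alternative would apply $J^{s+1}\Phi ^{\ast }$ to the defining splitting (\ref{tau_splitting}) and conclude by the uniqueness established in the Lemma, but the component computation above is shorter.
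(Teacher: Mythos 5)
Your proof is correct, but it closes the argument by a different mechanism than the paper. Both proofs share the same first move: from the symmetry of $\lambda$ (plus the automatic invariance of $\lambda_b$, which you spell out and the paper leaves implicit) one gets $J^{s+1}\Phi^{\ast}\tau=\tau$ via the Remark around (\ref{invar_EL_form}). From there the paper does exactly what you relegate to your final sentence as the ``coordinate-free alternative'': it checks that $J^{s+2}\Phi^{\ast}\left(h\mathbf{i}_{J^{s+1}\Xi}\tau\right)=h\mathbf{i}_{J^{s+1}\Xi}\tau$ and then invokes the uniqueness clause of the Lemma defining the splitting (\ref{tau_splitting}), which yields the invariance of $\mathcal{B}(\xi)$ and $\mathcal{T}(\xi)$ simultaneously and without touching components. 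You instead read off the explicit formula $\mathcal{T}_{~i}^{j}=C_{~i}^{Aj}\tau_{A}$ from (\ref{T_comps}) and show each factor is separately fixed ($C_{~i}^{Aj}$ because by A1 it depends only on $(x^k,y^B)$, which $\Phi$ does not move; $\tau_A$ by comparing coefficients of $\omega^A\wedge\omega_0$). Your route is more elementary and makes visible exactly where the hypothesis ``trivial on $Y^{(b)}$'' enters; the paper's route is shorter once the Lemma is in place and gives the gauge invariance of the balance function for free.

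One point deserves flagging, though it afflicts the paper's argument equally: the passage from invariance of $E(\lambda_m)$ to invariance of its background component $\tau=E^{(b)}(\lambda_m)$, which you correctly single out as the delicate step. Your justification (``no $\omega^A$ leaking in'') requires that $f^{\sigma}$ not depend on the background variables $y^{A}$; if one reads ``acting trivially on $Y^{(b)}$'' only as fixing the coordinates $y^{A}$ while allowing $f^{\sigma}=f^{\sigma}(x^i,y^A,y^\nu)$, then $J^{1}\Phi^{\ast}\omega^{\sigma}$ acquires a term $\dfrac{\partial f^{\sigma}}{\partial y^{A}}\omega^{A}$ and the coefficient comparison gives $\tau_{A}\circ J^{s+1}\Phi=\tau_{A}$ only up to a multiple of $E_{\sigma}(\lambda_m)$, i.e.\ only on-shell. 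The intended reading is evidently $\Phi=id_{Y^{(b)}}\times_{X}\Phi^{(m)}$, under which your argument is complete; you would do well to state that assumption explicitly where you assert that the matter contact forms transform among themselves.
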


\begin{proof}
According to the hypothesis, $\Phi =(id_{X},id_{Y^{(b)}},\Phi ^{(m)});$
locally, $\Phi :(x^{i},y^{\sigma },y^{A})\mapsto (x^{i},f^{\sigma
}(x^{i},y^{\mu }),y^{A}).$

Since $\Phi $ is a symmetry of $\lambda ,$ we deduce that $J^{s+1}\Phi
^{\ast }\tau =\tau .$ Moreover, a brief calculation using the local
expression of $\Phi $ shows that: $J^{s+2}\Phi ^{\ast }(h\mathbf{i}%
_{J^{s+1}\Xi }\tau )=h\mathbf{i}_{J^{s+1}\Xi }\tau .$ Further, using the
uniqueness of the decomposition (\ref{tau_splitting}) of $h\mathbf{i}%
_{J^{s+1}\Xi }\tau $, we find that%
\begin{equation*}
J^{s+2}\Phi ^{\ast }\mathcal{B}(\xi )=\mathcal{B}(\xi ),~\ J^{s+1}\Phi
^{\ast }\mathcal{T}(\xi )=\mathcal{T}(\xi );
\end{equation*}%
the latter equality proves the statement.
\end{proof}

\section{The case of metric and tensor backgrounds}

We will study, in the following, the case when the background variables
consist of a metric and (optionally), some other tensor quantity. In this
case, it is useful to write the energy-momentum balance law (\ref%
{SEM_balance_relation}) in a manifestly covariant form, using Levi-Civita
covariant derivatives.

Denoting by $Met(X)$ the bundle of metrics, defined as the set of all
symmetric nondegenerate tensors of type (2,0)\ on $X$, the background
manifold becomes:%
\begin{equation*}
Y^{(b)}=Met(X)\times _{X}T^{p,q}(X)
\end{equation*}%
and, accordingly, $Y=Met(X)\times _{X}T^{p,q}(X)\times _{X}Y^{(m)}.$ With
the notations in the previous sections, the background variables are $%
y^{A}\in \{g^{jk},y_{j_{1}...j_{q}}^{i_{1}...i_{p}}\}$ (alternatively, one
can regard $Met(X)$ as a set of tensors of type (0,2) and use $g_{jk}$ as
coordinates). In the following, $dV_{g}=\sqrt{\left\vert \det g\right\vert }%
\omega _{0}$ will mean the Riemannian volume form on $X.$

Assume that the Lagrangian $\lambda $ is \textit{natural}, i.e., 
\begin{equation*}
\lambda _{m}=L_{m}dV_{g},
\end{equation*}%
where $L_{m}=L_{m}(x^{i},y^{\sigma },g^{jk},y_{~i}^{\sigma
},g_{~~i}^{jk},...y_{~i_{1}...i_{r}}^{\sigma },g_{~~i_{1}...i_{r}}^{jk})$ is
a differential invariant (also commonly called in the literature, a \textit{%
scalar}); with the notations in the above sections, we have: $\lambda _{m}=%
\mathcal{L}_{m}\omega _{0},$ where: 
\begin{equation}
\mathcal{L}_{m}=L_{m}\sqrt{\left\vert \det g\right\vert }.  \label{L_m_GR}
\end{equation}%
As any such Lagrangian is generally covariant, we can apply the above scheme.

\bigskip

The energy-momentum source form $\tau $ can be expressed as:%
\begin{equation}
\tau =\tau _{A}\omega ^{A}\wedge \omega _{0}=:\mathfrak{T}_{A}\omega
^{A}\wedge dV_{g},  \label{invar_expression_tau}
\end{equation}%
where%
\begin{equation}
\mathfrak{T}_{A}:=\dfrac{\tau _{A}}{\sqrt{\left\vert \det g\right\vert }}=%
\dfrac{1}{\sqrt{\left\vert \det g\right\vert }}\dfrac{\delta \mathcal{L}_{m}%
}{\delta y^{A}};  \label{T_A}
\end{equation}%
Accordingly, the energy-momentum tensor $\mathcal{T}:\mathcal{X}%
(X)\rightarrow \Omega _{n-1}^{s+1}(Y),$ $\xi \mapsto \mathcal{T}(\xi )=%
\mathcal{T}_{~i}^{j}\xi ^{i}\omega _{j}$ can be written, in any fibered
chart, as:%
\begin{equation*}
\mathcal{T}(\xi )=:T_{~i}^{j}\xi ^{i}\sqrt{\left\vert \det g\right\vert }%
\omega _{j},
\end{equation*}%
with:%
\begin{equation}
T_{~i}^{j}=C_{~i}^{Aj}\mathfrak{T}_{A}=\dfrac{1}{\sqrt{\left\vert \det
g\right\vert }}\mathcal{T}_{~i}^{j}.  \label{e-m_tensor}
\end{equation}%
Using (\ref{trans_T_comps}), we can see that, with respect to fibered
coordinate changes, $T_{~i}^{j}$ obey a tensor-type transformation rule: $%
T_{~i^{\prime }}^{j^{\prime }}=\dfrac{\partial x^{j^{\prime }}}{\partial
x^{j}}\dfrac{\partial x^{i}}{\partial x^{i^{\prime }}}T_{~i}^{j}.$

\bigskip

The canonical lifting%
\begin{equation*}
l^{(b)}:\mathcal{X}(X)\rightarrow \mathcal{X}(Y^{(b)}),~\ \ \xi ^{i}\dfrac{%
\partial }{\partial x^{i}}\mapsto \xi ^{i}\dfrac{\partial }{\partial x^{i}}%
+(C_{~i}^{A}\xi ^{i}+C_{~i}^{Aj}\xi _{,j}^{i})\dfrac{\partial }{\partial
y^{A}}
\end{equation*}%
is given by (\ref{tensor_lifting}), i.e., $C_{~i}^{A}=0$ and $C_{~i}^{Aj}\in
\{C_{~~i}^{(jk)h},C_{(j_{1}....j_{q})i}^{(i_{1}...i_{p})j}\}$ are as follows:%
\begin{eqnarray}
&&C_{~~i}^{(jk)h}=\delta _{i}^{j}g^{kh}+\delta _{i}^{k}g^{jh},
\label{C_metric} \\
&&C_{(j_{1}....j_{q})i}^{(i_{1}...i_{p})j}=\delta
_{i}^{i_{1}}y_{j_{1}...j_{q}}^{ji_{2}...i_{p}}+...+\delta
_{i}^{i_{p}}y_{j_{1}...j_{q}}^{i_{1}...i_{p-1}j}-\delta
_{j_{1}}^{j}y_{ij_{2}...j_{q}}^{i_{1}...i_{p}}-\delta
_{j_{q}}^{j}~y_{j_{1}...j_{q-1}i}^{i_{1}...i_{p}}.  \label{C_tensor1}
\end{eqnarray}%
The energy-momentum tensor components (\ref{e-m_tensor}) are: 
\begin{equation}
T_{~i}^{j}=2g^{jh}\mathfrak{T}_{hi}+(y_{j_{1}...j_{q}}^{ji_{2}...i_{p}}%
\mathfrak{T}%
_{ii_{2}...i_{p}}^{j_{1}...j_{q}}+....-y_{j_{1}...j_{q-1}i}^{i_{1}...i_{p}}%
\mathfrak{T}_{i_{1}i_{2}...i_{p}}^{j_{1}...j_{q-1}j}).
\label{tensor_background_T}
\end{equation}%
A direct calculation using the relation $d_{j}\sqrt{\left\vert \det
g\right\vert }=\Gamma _{~ji}^{i}\sqrt{\left\vert \det g\right\vert }$ (where 
$\Gamma _{~jk}^{i}$ denote the formal\footnote{%
Here, \textit{formal }means the fact that the Christoffel symbols $\Gamma
_{~jk}^{i}$ (defined by the usual formula)\ are regarded as functions on $%
J^{1}Met(X);$ only when evaluated on a section $(x^{i})\mapsto
(g^{jk}(x^{i})),$ they become the usual Christoffel symbols on $X.$}
Christoffel symbols of $g$) shows that the energy-momentum balance law (\ref%
{SEM_balance_relation}) can be written in terms of formal Levi-Civita
covariant derivatives $_{;i}$, as:%
\begin{equation*}
\xi ^{i}[y_{~;i}^{A}\mathfrak{T}_{A}+T_{~i;j}^{j}]\approx _{y^{\sigma }}0.
\end{equation*}%
Taking into account that $g_{~~;i}^{jk}=0,$ in the above relation, the $%
Met(X)$ part of the expression $y_{~;i}^{A}\mathfrak{T}_{A}$ vanishes, i.e., 
$y_{~;i}^{A}\mathfrak{T}_{A}=y_{j_{1}...j_{q};i}^{i_{1}...i_{p}}\mathfrak{T}%
_{i_{1}...i_{p}}^{j_{1}...j_{q}}.$ In other words:

\begin{theorem}
(manifestly covariant version of energy-momentum balance law): If the
background manifold is $Y^{(b)}=Met(X)\times _{X}T^{p,q}(X),$ then, for any
natural matter Lagrangian $\lambda _{m}=\mathcal{L}_{m}\omega _{0}\in \Omega
_{n}^{r}(Y^{(b)}\times _{X}Y^{(m)})$ and for any section $\gamma
:X\rightarrow Y:$%
\begin{equation}
(y_{~;i}^{A}\mathfrak{T}_{A}+T_{~i;j}^{j})\circ J^{s+2}\gamma \approx
_{y^{\sigma }}0,~\ \ i=1,...,n,  \label{covariant_cons_law_general}
\end{equation}%
where: $\mathfrak{T}_{A}=\dfrac{1}{\sqrt{\left\vert \det g\right\vert }}%
\dfrac{\delta \mathcal{L}_{m}}{\delta y^{A}},\ A=\left(
_{j_{1}...j_{q}}^{i_{1}...i_{p}}\right) ,$ semicolons denote Levi-Civita
covariant derivatives and $\approx _{y^{\sigma }}$means equality on-shell
for the matter variables $y^{\sigma }$.
\end{theorem}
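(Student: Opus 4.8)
The plan is to deduce the covariant statement (\ref{covariant_cons_law_general}) directly from the coordinate balance law of Theorem \ref{Hilbert_SEM_prop}(i), by converting every formal partial derivative $d_{j}$ into a formal Levi-Civita covariant derivative. Since the background bundle $Met(X)\times _{X}T^{p,q}(X)$ carries the tensorial lift (\ref{tensor_lifting}), one has $C_{~i}^{A}=0$, so (\ref{SEM_balance_relation}) reduces to $(d_{j}\mathcal{T}_{~i}^{j}+y_{~i}^{A}\tau _{A})\circ J^{s+2}\gamma \approx _{y^{\sigma }}0$. First I would insert the density factorizations $\mathcal{T}_{~i}^{j}=T_{~i}^{j}\sqrt{|\det g|}$ and $\tau _{A}=\mathfrak{T}_{A}\sqrt{|\det g|}$, apply the product rule together with $d_{j}\sqrt{|\det g|}=\Gamma _{~jk}^{k}\sqrt{|\det g|}$, and divide by the nowhere-vanishing factor $\sqrt{|\det g|}$ to obtain the scalar relation
\begin{equation*}
(d_{j}T_{~i}^{j}+\Gamma _{~jk}^{k}T_{~i}^{j}+y_{~i}^{A}\mathfrak{T}_{A})\circ J^{s+2}\gamma \approx _{y^{\sigma }}0.
\end{equation*}

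Next I would recognize the first two terms as a covariant divergence with one leftover connection term. From the formal covariant derivative of a $(1,1)$-object, $T_{~i;j}^{j}=d_{j}T_{~i}^{j}+\Gamma _{~jk}^{k}T_{~i}^{j}-\Gamma _{~ji}^{l}T_{~l}^{j}$, so that $d_{j}T_{~i}^{j}+\Gamma _{~jk}^{k}T_{~i}^{j}=T_{~i;j}^{j}+\Gamma _{~ji}^{l}T_{~l}^{j}$, and the relation becomes $T_{~i;j}^{j}+\Gamma _{~ji}^{l}T_{~l}^{j}+y_{~i}^{A}\mathfrak{T}_{A}\approx _{y^{\sigma }}0$. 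It then remains to show that the residual term $\Gamma _{~ji}^{l}T_{~l}^{j}$ conspires with $y_{~i}^{A}\mathfrak{T}_{A}$ to reassemble $y_{~;i}^{A}\mathfrak{T}_{A}$.

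The whole matter hinges on the identity
\begin{equation*}
\Gamma _{~ji}^{l}C_{~l}^{Aj}=y_{~;i}^{A}-y_{~i}^{A},
\end{equation*}
valid for each background index $A$. Granting it, and using $T_{~l}^{j}=C_{~l}^{Aj}\mathfrak{T}_{A}$ from (\ref{e-m_tensor}), I get $\Gamma _{~ji}^{l}T_{~l}^{j}=(y_{~;i}^{A}-y_{~i}^{A})\mathfrak{T}_{A}$; adding $y_{~i}^{A}\mathfrak{T}_{A}$ then yields exactly $y_{~;i}^{A}\mathfrak{T}_{A}+T_{~i;j}^{j}\approx _{y^{\sigma }}0$. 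Finally, metric compatibility of the formal Levi-Civita connection, $g_{~~;i}^{jk}=0$, kills the $Met(X)$ contribution to $y_{~;i}^{A}\mathfrak{T}_{A}$, leaving only the $T^{p,q}(X)$ part and giving (\ref{covariant_cons_law_general}).

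The main obstacle is the identity $\Gamma _{~ji}^{l}C_{~l}^{Aj}=y_{~;i}^{A}-y_{~i}^{A}$, which I would verify by substituting the explicit coefficients (\ref{C_metric})--(\ref{C_tensor1}) and matching, slot by slot, the contracted Christoffel terms against the connection corrections in the covariant derivative of a $(p,q)$-tensor. Each contravariant slot contributes $\Gamma _{~ji}^{a}y^{\cdots j\cdots }$, which by the symmetry $\Gamma _{~ji}^{l}=\Gamma _{~ij}^{l}$ and a relabeling equals the correction $\Gamma _{~ik}^{a}y^{\cdots k\cdots }$; each covariant slot similarly reproduces $-\Gamma _{~ib}^{k}y_{\cdots k\cdots }$. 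For the metric block one checks $\Gamma _{~ji}^{l}C_{~l}^{(ab)j}=-g_{~~i}^{ab}$, which is consistent with $g_{~~;i}^{ab}=0$. The symmetry of the Christoffel symbols in their lower indices is the single feature that makes every one of these matchings work, so that is where I would concentrate the bookkeeping.
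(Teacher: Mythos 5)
Your proposal is correct and follows essentially the same route as the paper, which likewise derives (\ref{covariant_cons_law_general}) from (\ref{SEM_balance_relation}) by a direct calculation using $d_{j}\sqrt{|\det g|}=\Gamma_{~jk}^{k}\sqrt{|\det g|}$ and the metric compatibility $g_{~~;i}^{jk}=0$. You in fact supply more detail than the paper does, by isolating and verifying the key contraction identity $\Gamma_{~ji}^{l}C_{~l}^{Aj}=y_{~;i}^{A}-y_{~i}^{A}$ that makes the leftover Christoffel terms reassemble into $y_{~;i}^{A}\mathfrak{T}_{A}$.
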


Let us investigate, in the following, two particular cases.

\subsection{Purely metric theories\textbf{\ }}

Assume that the only background variable is a metric, i.e., $Y^{(b)}=Met(X),$
$y^{A}=g^{jk}.$ In this case, the energy-momentum source form is 
\begin{equation*}
\tau =\mathfrak{T}_{hl}\omega ^{hl}\wedge dV_{g},~\ \ \ \ \ \mathfrak{T}%
_{hl}=\dfrac{1}{\sqrt{\left\vert \det g\right\vert }}\dfrac{\delta \mathcal{L%
}_{m}}{\delta g^{hl}}.\ 
\end{equation*}%
The energy-momentum tensor $T:\mathcal{X}(X)\rightarrow \Omega
_{n-1}^{s+1}(Y)$ is locally given by:%
\begin{equation}
T_{~i}^{j}=C_{~~i}^{(hl)j}\mathfrak{T}_{hl}=2g^{hj}\mathfrak{T}_{hi}.
\label{Hilbert_SEM_density}
\end{equation}

Lowering indices by $g,$ we get:

\begin{proposition}
If the only background variable is a metric tensor $g^{ij}$, then the
energy-momentum tensor $\mathcal{T}$ is given by%
\begin{equation*}
T_{ij}=\dfrac{2}{\sqrt{\left\vert \det g\right\vert }}\dfrac{\delta \mathcal{%
L}_{m}}{\delta g^{ij}}.
\end{equation*}
\end{proposition}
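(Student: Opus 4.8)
The plan is to take the already-derived local expression $T_{~i}^{j}=2g^{hj}\mathfrak{T}_{hi}$ from (\ref{Hilbert_SEM_density}) and simply lower its free contravariant index with the metric. The discussion following (\ref{e-m_tensor}) establishes that the coefficients $T_{~i}^{j}$ transform as the components of a genuine $(1,1)$-tensor under fibered coordinate changes, so contracting them with $g_{jk}$ yields a well-defined $(0,2)$-tensor $T_{ij}$; the index-lowering is therefore geometrically meaningful rather than a purely formal manipulation of symbols.

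Concretely, I would set $T_{ik}:=g_{kj}\,T_{~i}^{j}$ and substitute (\ref{Hilbert_SEM_density}) to get
\[
T_{ik}=2\,g_{kj}g^{hj}\,\mathfrak{T}_{hi}=2\,\delta_{k}^{h}\,\mathfrak{T}_{hi}=2\,\mathfrak{T}_{ki},
\]
using the identity $g_{kj}g^{hj}=\delta_{k}^{h}$. Since $\mathfrak{T}_{hl}$ is symmetric --- a property inherited from the symmetry of the metric coordinate $g^{hl}$, the multi-index $A=(hl)$ being symmetric by construction --- we have $\mathfrak{T}_{ki}=\mathfrak{T}_{ik}$, so after relabelling $k\mapsto j$ the result reads $T_{ij}=2\,\mathfrak{T}_{ij}$. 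Substituting the definition $\mathfrak{T}_{ij}=\dfrac{1}{\sqrt{|\det g|}}\dfrac{\delta\mathcal{L}_{m}}{\delta g^{ij}}$ from (\ref{T_A}) then gives exactly the asserted Hilbert formula.

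I do not anticipate any substantive obstacle, since the computation amounts to a single contraction and the factor $2$ was already absorbed into (\ref{Hilbert_SEM_density}) via the symmetry of $\mathfrak{T}$. The only points that deserve a word of justification rather than a calculation are (a) that the lowering is legitimate, which follows from the tensorial transformation rule for $T_{~i}^{j}$ recorded after (\ref{e-m_tensor}), and (b) the symmetry of the variational derivative with respect to the symmetric variable $g^{ij}$, which is the same property that collapsed the two terms of $C_{~~i}^{(hl)j}$ in (\ref{C_metric}) into a single factor when (\ref{Hilbert_SEM_density}) was obtained.
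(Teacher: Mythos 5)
Your proposal is correct and follows exactly the paper's route: the paper obtains $T_{~i}^{j}=C_{~~i}^{(hl)j}\mathfrak{T}_{hl}=2g^{hj}\mathfrak{T}_{hi}$ from the general formula (\ref{T_comps}) with the tensor-lifting coefficients (\ref{C_metric}), and then simply states ``lowering indices by $g$'' to arrive at $T_{ij}=2\mathfrak{T}_{ij}$. Your additional remarks on the legitimacy of the index lowering (via the tensorial transformation rule of $T_{~i}^{j}$) and on the symmetry of $\mathfrak{T}_{hl}$ only make explicit what the paper leaves implicit; there is no substantive difference.
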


The energy-momentum balance law (\ref{covariant_cons_law_general}) reads: 
\begin{equation}
T_{~i;j}^{j}\circ J^{s+2}\gamma \approx _{y^{\sigma }}0.
\label{covariant_cons_law_1}
\end{equation}

\bigskip

Applying the same algorithm to the Hilbert Lagrangian $\lambda _{g}=RdV_{g},$
the corresponding "energy-momentum tensor" (\ref{Hilbert_SEM_density})\ is
the Einstein tensor $\mathcal{E}=E_{~i}^{j}\sqrt{\left\vert \det
g\right\vert }dx^{i}\otimes \omega _{j}$; the covariant conservation law (%
\ref{covariant_cons_law_1}) gives the contracted Bianchi identities:%
\begin{equation*}
E_{~i;j}^{j}\circ J^{s+2}\gamma =0.
\end{equation*}

\subsection{Metric-affine theories}

In a metric-affine theory, the background variables are, \textit{a priori},
a metric and a connection. Hence, at first sight, the background manifold is 
$Y^{(b)}=Met(X)\times _{X}Conn(X)$ - and the canonical lift $l$ is, in this
case, of index 2, \cite{Giachetta1}, i.e., we cannot apply it the above
considerations. But this problem can be overcome if we consider, instead of
connections, \textit{distortion tensors}%
\begin{equation}
N_{~jk}^{i}=K_{~jk}^{i}-\Gamma _{~jk}^{i},  \label{distortion_tensor}
\end{equation}%
giving the difference between the connection (with coefficients $K_{~jk}^{i}$%
) of the theory and the Levi-Civita connection of the metric. That is, we
can consider as our background manifold%
\begin{equation*}
Y^{(b)}=Met(X)\times _{X}T^{1,2}(X),
\end{equation*}%
with fibered coordinates $(x^{i},g^{ij},N_{~jk}^{i})$. Since both factors
are tensor spaces, we can use the canonical lift (\ref{tensor_lifting}), of
index 1 in the background variables. The corresponding coefficients $%
C_{~i}^{Aj}\in \{C_{~~i}^{(hl)j},C_{(hl)i}^{mj}\}$ are: 
\begin{equation*}
C_{~~i}^{(hl)j}=\delta _{i}^{l}g^{hj}+\delta _{i}^{h}g^{jl},\ \ \ \
C_{(hl)i}^{mj}=\delta _{i}^{m}N_{~hl}^{j}-\delta _{h}^{j}N_{~il}^{m}-\delta
_{l}^{j}N_{~hi}^{m}.
\end{equation*}%
The energy-momentum tensor components $T_{~i}^{j}:=C_{~i}^{Aj}\mathfrak{T}%
_{A}$ are obtained as:%
\begin{equation}
T_{~i}^{j}=2\mathfrak{T}_{~i}^{j}+(\mathfrak{T}_{~~i}^{hl}N_{~hl}^{j}-%
\mathfrak{T}_{~~m}^{jl}N_{~il}^{m}-\mathfrak{T}_{~~m}^{hj}N_{~hi}^{m}),
\label{MA_energy-momentum}
\end{equation}%
where:%
\begin{equation*}
\mathfrak{T}_{ij}=\dfrac{1}{\sqrt{\left\vert \det g\right\vert }}\dfrac{%
\delta \mathcal{L}_{m}}{\delta g^{ij}},\text{~\ \ \ }\mathfrak{T}_{~~i}^{jk}=%
\dfrac{1}{\sqrt{\left\vert \det g\right\vert }}\dfrac{\delta \mathcal{L}_{m}%
}{\delta N_{~jk}^{i}}=\dfrac{\delta L_{m}}{\delta N_{~kh}^{j}}.
\end{equation*}

\textbf{Remark. }The energy-momentum tensor $T$ contains a contribution from
the metric (which is the symmetric, Hilbert energy-momentum tensor $\overset{%
Met}{T}_{ij}=2\mathfrak{T}_{ij}$) plus a term given by the distortion tensor 
$N.$ That is, $T_{ij}$ is, generally, non-symmetric.

Applying (\ref{covariant_cons_law_general}), we find:

\begin{proposition}
In a metric-affine theory, the energy-momentum tensor (\ref%
{MA_energy-momentum}) obeys the balance law:%
\begin{equation}
(T_{~i;j}^{j}+N_{~kh;i}^{j}\dfrac{\delta L_{m}}{\delta N_{~kh}^{j}})\circ
J^{s+2}\gamma \approx _{y^{\sigma }}0,  \label{MA_conservation_law}
\end{equation}%
where semicolons denote Levi-Civita covariant derivatives.
\end{proposition}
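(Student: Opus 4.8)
The plan is to specialize the general manifestly covariant balance law (\ref{covariant_cons_law_general}) to the metric-affine background $Y^{(b)}=Met(X)\times_X T^{1,2}(X)$, where the background variables are $y^A\in\{g^{jk},N^i_{~jk}\}$. Since $g^{jk}_{~~;i}=0$ under the Levi-Civita covariant derivative, the sum $y^A_{~;i}\mathfrak{T}_A$ collapses, leaving only the distortion-tensor contribution. Concretely, I would write $y^A_{~;i}\mathfrak{T}_A = N^m_{~hl;i}\,\mathfrak{T}^{hl}_{~~m}$, and then substitute the identification $\mathfrak{T}^{jk}_{~~i}=\dfrac{\delta L_m}{\delta N^i_{~kj}}$ recorded just above the statement. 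Relabeling the summation indices $(m,h,l)\mapsto(j,k,h)$ to match the index pattern in (\ref{MA_conservation_law}), this term becomes $N^j_{~kh;i}\dfrac{\delta L_m}{\delta N^j_{~kh}}$, and combined with the $T^j_{~i;j}$ term from (\ref{covariant_cons_law_general}) it yields exactly the asserted balance law.

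The first step is to confirm that the hypotheses of Theorem~\ref{covariant_cons_law_general} are met: the lift on each tensor factor $Met(X)$ and $T^{1,2}(X)$ is the canonical tensor lifting (\ref{tensor_lifting}), which is of index $1$ in the background variables (this was the whole point of replacing the connection $K^i_{~jk}$ by the distortion $N^i_{~jk}=K^i_{~jk}-\Gamma^i_{~jk}$, as explained in the text preceding the statement), and the matter Lagrangian is assumed natural, hence generally covariant, so assumptions A1 and A2 hold. With this in place, the covariant conservation law (\ref{covariant_cons_law_general}) applies verbatim, and the entire derivation reduces to evaluating the index-contracted term $y^A_{~;i}\mathfrak{T}_A$ for the specific background at hand.

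The only real point requiring care is the bookkeeping of indices in the distortion term. Because $N^i_{~jk}$ carries three indices and the covariant derivative is the formal Levi-Civita one, I would verify that the variational derivative $\delta\mathcal{L}_m/\delta N^i_{~jk}$ pairs correctly with $N^i_{~jk;\ell}$ under the contraction dictated by $A=\binom{i_1\ldots i_p}{j_1\ldots j_q}$ with $(p,q)=(1,2)$, and that the symmetry conventions used in writing $\mathfrak{T}^{jk}_{~~i}$ are consistent with those in the general formula. This is the step where a transposed index or a misplaced sign would slip in, so I expect the main (though still routine) obstacle to be matching the index slots between the abstract tensor-valued term $y^A_{~;i}\mathfrak{T}_A$ and the explicit three-index expression $N^j_{~kh;i}\,\delta L_m/\delta N^j_{~kh}$ that appears in (\ref{MA_conservation_law}). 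Once this identification is made, no further computation is needed and the proposition follows immediately.
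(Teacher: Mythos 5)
Your proposal is correct and follows exactly the paper's route: the paper obtains the proposition by ``Applying (\ref{covariant_cons_law_general})'' to the background $Met(X)\times_X T^{1,2}(X)$, using $g^{jk}_{~~;i}=0$ to reduce $y^A_{~;i}\mathfrak{T}_A$ to the distortion-tensor term, which is precisely your argument. Your additional check that the distortion-tensor reformulation makes the lift of index $1$ (so that A1 and A2 hold) matches the discussion the paper gives just before the statement.
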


The precise form of the matter Lagrangian function $L_{m}$, as well as its
order in either the background or the dynamical variables is irrelevant.

Using in the above procedure, instead of $\lambda _{m},$ the background
Lagrangian $\lambda _{b}=L_{b}\sqrt{\left\vert \det g\right\vert }\omega
_{0},$ the obtained energy-momentum balance law (\ref{MA_conservation_law})
becomes an identity.

\section{Energy-momentum tensor and variational completion}

Given a source form $\varepsilon =\varepsilon _{\sigma }\omega ^{\sigma
}\wedge \omega _{0}$ on a fibered manifold $Y,$ a (local) \textit{%
variational completion, \cite{canonical-var-compl}, }of $\varepsilon $ is a
source form $\kappa $ on an open set $W\subset Y,$ with the property that $%
\varepsilon +\kappa $ is variational. In particular, the \textit{canonical
variational completion} of $\varepsilon $ is the source form $\kappa
(\varepsilon )$ given by the difference between the Euler-Lagrange form of
the \textit{Vainberg--Tonti Lagrangian}%
\begin{equation}
\mathit{\ }\lambda _{\varepsilon }=\mathcal{L}_{\varepsilon }\omega _{0},~\
\ \ \mathcal{L}_{\varepsilon }=y^{A}\underset{0}{\overset{1}{\int }}%
\varepsilon _{A}(x^{i},ty^{B},...,ty_{~i_{1}...i_{r}}^{B})dt
\label{VT_Lagrangian}
\end{equation}%
of $\varepsilon $ and $\varepsilon $ itself:%
\begin{equation}
\kappa (\varepsilon )=E(\lambda _{\varepsilon })-\varepsilon .
\label{canonical_completion}
\end{equation}%
The source form $\kappa (\varepsilon )$ can be expressed, \cite%
{canonical-var-compl}, in terms of the coefficients of the Helmholtz form of 
$\varepsilon ,$ which give a measure of the non-variationality of $%
\varepsilon .$

\bigskip

Assume, in the following, that the background manifold is $Y^{(b)}=Met(X),$
with fibered coordinates $(x^{i},g_{ij});$ in this case, the energy-momentum
source form $\tau =\tau ^{ij}\omega _{ij}\wedge \omega _{0}$ and the
energy-momentum tensor density $\mathcal{T}$ are related by: 
\begin{equation*}
\mathcal{T}^{ij}=-2\tau ^{ij}
\end{equation*}%
(the minus sign appears because, in the relation $\mathcal{T}%
_{~i}^{j}=C_{(hl)i}^{j}\tau ^{hl},$ $C_{(hl)i}^{j}=-\delta
_{h}^{j}g_{li}-\delta _{l}^{j}g_{hi}$).

If we know a term $\varepsilon $ of the energy-momentum source form $\tau ,$
a Lagrangian and, accordingly, the full expression of $\tau $ can be found
using (\ref{VT_Lagrangian}), (\ref{canonical_completion}).

\bigskip

\textbf{Example}: \textit{energy-momentum tensor of a perfect fluid in
general relativity}. Assume that $\dim X=4,$ the signature of the metric is $%
(+,-,-,-)$ and physical measurement units are such that $c=1.$ For an ideal
fluid, the full expression of the energy-momentum tensor is%
\begin{equation}
T^{ij}=(p+\rho )u^{i}u^{j}-pg^{ij},  \label{em_tensor_fluid}
\end{equation}%
where $\rho $ denotes the density and $p,$ the isotropic pressure of the
fluid (both defined in a local inertial frame in which the fluid is at rest, 
\cite{di Felice}), $u^{i}=\dfrac{dx^{i}}{d\tau }$ are the components of the
4-velocity of the fluid and $d\tau =\sqrt{g_{ij}dx^{i}dx^{j}}.$

\bigskip

Let us take, for instance, the term $\rho u^{i}u^{j}$ and find its canonical 
$Met(X)$-variational completion. Build the source form $\varepsilon
=\varepsilon ^{ij}\omega _{ij}\wedge \omega _{0}$ on $Met(X),$ with
coefficients $\varepsilon ^{ij}=\varepsilon ^{ij}(x^{k},g_{kl})$ given by:%
\begin{equation}
\varepsilon ^{ij}=\alpha \rho u^{i}u^{j}\sqrt{\det (g_{ij})},~\ \ \alpha \in 
\mathbb{R}  \label{epsilon}
\end{equation}%
and let us study the behavior of $\varepsilon ^{ij}$ with respect to
homotheties $\chi _{t}:g_{ij}\mapsto tg_{ij},$

The density $\rho $ is inverse proportional to the (3-dimensional) spatial
volume, given, in a local inertial coframe $\{e^{i}\}$, by the volume
element $dV_{3}=e^{1}\wedge e^{2}\wedge e^{3};$ passing to an arbitrary
frame, this is, \cite{Landau}, $dV_{3}=\dfrac{\sqrt{\left\vert \det
(g_{kl})\right\vert }}{\sqrt{g_{00}}}dx^{1}\wedge dx^{2}\wedge dx^{3}.$
Using the relations: $\sqrt{\left\vert \det (g_{ij})\right\vert }\circ \chi
_{t}=t^{2}\sqrt{\left\vert \det (g_{ij})\right\vert },$ $\sqrt{g_{00}}\circ
\chi _{t}=t^{1/2}\sqrt{g_{00}}$, we find: $\rho \circ \chi _{t}=t^{-3/2}\rho
.$ Using the expression $u^{i}=\dfrac{dx^{i}}{d\tau },$ we have $u^{i}\circ
\chi _{t}=t^{-1/2}u^{i}.$ Substituting into (\ref{epsilon}), 
\begin{equation*}
\varepsilon ^{ij}\circ \chi _{t}=t^{-1/2}\alpha \rho u^{i}u^{j}\sqrt{%
\left\vert \det (g_{ij})\right\vert };
\end{equation*}%
Since $g_{ij}u^{i}u^{j}=1,$ we get the Vainberg-Tonti Lagrangian $\lambda
_{\varepsilon }=\mathcal{L}_{\varepsilon }\omega _{0},$ with:%
\begin{equation*}
\mathcal{L}_{\varepsilon }=g_{ij}\underset{0}{\overset{1}{\int }}\varepsilon
^{ij}\circ \chi _{t}dt=\alpha \rho \sqrt{\left\vert \det (g_{ij})\right\vert 
}\underset{0}{\overset{1}{\int }}t^{-1/2}dt=2\alpha \rho \sqrt{\left\vert
\det (g_{ij})\right\vert }.
\end{equation*}%
Variation of $\lambda _{\varepsilon }$ with respect to the metric (as in 
\cite{di Felice}, p. 197) gives then: $T^{ij}=-2\alpha \{(p+\rho
)u^{i}u^{j}-pg^{ij}\}.$ The term $\rho u^{i}u^{j}$ is obtained for $\alpha
=-1/2;$ for this value, the Lagrangian is nothing but the known Lagrangian: 
\begin{equation}
\lambda _{fluid}=-\rho \sqrt{\left\vert \det (g_{ij})\right\vert }\omega
_{0}.  \label{fluid_L}
\end{equation}

The energy-momentum tensor (\ref{em_tensor_fluid}) and the Lagrangian (\ref%
{fluid_L}) were found in \cite{inv-problem-book} starting from the term $%
pg^{ij}.$

\end{document}